\documentclass[12pt, draftclsnofoot, onecolumn]{IEEEtran}
\ifCLASSINFOpdf
\else
\fi
\makeatletter
\long\def\@makecaption#1#2{\ifx\@captype\@IEEEtablestring%
\footnotesize\begin{center}{\normalfont\footnotesize #1}\\
{\normalfont\footnotesize\scshape #2}\end{center}%
\@IEEEtablecaptionsepspace
\else
\@IEEEfigurecaptionsepspace
\setbox\@tempboxa\hbox{\normalfont\footnotesize {#1.}~~ #2}%
\ifdim \wd\@tempboxa >\hsize%
\setbox\@tempboxa\hbox{\normalfont\footnotesize {#1.}~~ }%
\parbox[t]{\hsize}{\normalfont\footnotesize \noindent\unhbox\@tempboxa#2}%
\else
\hbox to\hsize{\normalfont\footnotesize\hfil\box\@tempboxa\hfil}\fi\fi}
\makeatother
\hyphenation{op-tical net-works semi-conduc-tor}
\usepackage{graphicx,cite,epsfig,amssymb,amsmath,algorithm,algorithmic,color}
\usepackage{ulem}
\usepackage{algorithm}
\usepackage{algorithmic}

\newcommand{\Cut}[1]{}
\normalem

\newtheorem{theorem}{Theorem}
\newtheorem{lemma}{Lemma}

\usepackage{amsmath}
\DeclareMathOperator*{\argmax}{arg\,max}
\usepackage{epstopdf}
\usepackage{bm}
\usepackage{cases}

\begin{document}
%
\title{UAV-assisted Cooperative Communications with Wireless Information and Power Transfer}
%
%
%

\author{\normalsize Sixing Yin, Jing Tan and Lihua Li\\
Beijing University of Posts and Telecommunications \\  \{yinsixing, tanjing, lilihua\}@bupt.edu.cn\\
}

\maketitle

\begin{abstract}
In this paper, we focus on a typical cooperative communication system with one pair of source and destination, where a unmanned aerial vehicle (UAV) flying from a start location to an end location serves as a mobile relay. To efficiently utilize energy in ambient environment, the UAV's transmission capability is powered exclusively by radio signal transmitted from the source via the power-splitting mechanism. In such a cooperative communication system, we study the end-to-end cooperative throughput maximization problem by optimizing the UAV's power profile, power-splitting ratio profile and trajectory for both amplify-and-forward (AF) and decode-and-forward (DF) protocols. The problem is decomposed into two subproblems: profile optimization given trajectory and trajectory optimization given profile. The former one is solved via the dual decomposition with in-depth analysis and the latter one is solved via successive convex optimization, by which a lower bound is iteratively maximized. Then the end-to-end cooperative throughput maximization problem is solved by alternately solving the two subproblems. The numerical results show that with the proposed optimal solution, choice for the UAV's power profile and power-splitting ratio profile is more long-sighted than the greedy strategy from our previous work and the successive optimization is able to converge in a few rounds of iteration. Moreover, as for the end-to-end cooperative throughput, the proposed optimal solution outperforms both static and greedy strategies, especially for the AF protocol.
\end{abstract}

\begin{IEEEkeywords}
Cooperative communications, UAV-assisted communications, wireless power and information transfer.
\end{IEEEkeywords}

%
\IEEEpeerreviewmaketitle

\section{Introduction}
In recent years, wireless communications aided by unmanned aerial vehicles (UAV, also known as drones) have received significant attention from academia, industry as well as government\cite{5892898}. Compared to  terrestrial infrastructures in conventional wireless communications, UAVs that serve as aerial transceivers have remarkable advantages in terms of low cost, miniaturized size, high mobility and deployment flexibility such that UAV-assisted wireless communication systems have been extensively applied in a number of applications such as military operations and scientific missions.

UAV-aided ubiquitous coverage and UAV-aided information dissemination are two major applications of UAV-assisted communications\cite{7470933}. In the former one, UAVs are deployed to assist existing communication infrastructures in order to provide seamless coverage within an area. Two typical scenarios are rapid service recovery (e.g., when infrastructures are damaged due to natural disasters\cite{6766080}) and base station offloading in extremely crowded areas, which is also one of the five key scenarios that need to be effectively addressed in the fifth generation (5G) wireless communication systems\cite{6815890}. In the latter one, UAVs are deployed to disseminate (collect) delay-tolerant information to (from) a large number of wireless devices. A typical example is information collection in smart agriculture applications enabled by wireless sensor networks.

Another important application for UAV-assisted wireless communications is to serve as relays in cooperative communication systems, which is an effective technique for communication performance improvement and coverage range expansion under weak channel condition (due to long distance or severe obstacles between source and destination). Due to their high mobility, while serving as aerial mobile relays, UAVs are more likely to find better locations and gain more favorable channel condition (e.g., better chance of line-of-sight links) by dynamic location adjustment such that the cooperative communication performance can be significantly enhanced even when the direct link between source and destination is severely blocked, especially for delay-tolerant applications. 

Despite a large number of potential applications, UAV-assisted wireless communications are still facing new challenges. One of such practical issues is that performance and lifetime of the UAVs are usually limited by the onboard battery, which is designed as low-capacity one to cater for aircraft's size and weight. To tackle such an issue, there has been research work focusing on energy efficiency enhancement for UAV-assisted wireless communication systems\cite{7888557}\cite{7997208}. From a different perspective, using renewable energy as replenishment for UAV-assisted wireless communication systems could be another option. However, mounting UAVs with off-the-self energy harvesting devices (e.g., solar panels, which are far bigger than UAVs themselves) might significantly increase their weight and lead to even lower energy efficiency. Recently, simultaneous wireless information and power transfer (SWIPT) receives extensive attention since it fully utilizes not only information but also energy carried by radio signal. As an upsurge of recent research topics, there have been volumes of literatures on SWIPT in terms of theoretical analysis\cite{4595260}\cite{5513714}, practical implementation\cite{6567869}\cite{6804407} as well as applications in wireless communication systems\cite{6661329}\cite{6589954}. Therefore, SWIPT is a cost-effective way to replenish a UAV's energy without additional energy harvesting devices installed. 

Based on such motivation, in this paper, we consider a typical cooperative communication system with one source and one destination, which is assisted by a UAV serving as an aerial mobile relay with both amplify-and-forward (AF) and decode-and-forward (DF) protocols. To efficiently utilize energy in ambient environment, we propose to introduce SWIPT into the UAV-assisted cooperative communication system, i.e., the UAV's transmission capability is powered by radio signal transmitted from the source while the onboard battery accounts only for its maneuverability. For SWIPT at the UAV, we focus on the power-splitting mechanism, through which, radio signal received at the UAV is split into two power streams, one of which is used for energy harvesting and the other for information relaying (quantitatively controlled by power-splitting ratio). We study the cooperative throughput maximization problem by taking into all the important design aspects for the UAV's relaying power, SWIPT scheme and route planning. The main contributions of this paper are summarized as follows:

\begin{enumerate}
\item We propose a basic model for a UAV-assisted cooperative communication system with SWIPT, where a UAV serves as an aerial mobile relay and its transmission capability is powered exclusively by radio signal sent from the source via the power-splitting scheme. The cooperative throughput maximization problem is then formulated to optimize the UAV's power profile, power-splitting ratio profile and trajectory, and is solved via alternate optimizing two subproblems.

\item With the UAV's trajectory fixed, we first prove the convexity of the first subproblem and derive the closed-form solutions to the UAV's optimal power profile and power-splitting ratio profile given the dual variables with in-depth analysis for both AF and DF protocols. With such closed-form solutions, the dual variables can then be readily derived via subgradient-based methods.

\item With the UAV's power profile and power-splitting ratio profile fixed, we resort to successive convex optimization technique to optimize the UAV's trajectory, which transforms the subproblem into an incremental trajectory optimization problem, and a lower bound for the original objective function of the subproblem is iteratively maximized. Then the original cooperative throughput maximization problem is solved via alternately solving the two subproblems.

\end{enumerate}

Notably, a similar scenario can be found in one of the pioneer work in UAV-assisted mobile relaying systems, where joint optimization for the UAV's power profile and trajectory has been investigated to maximize the end-to-end cooperative throughput\cite{7572068}. Differently, in this paper, we assume that the UAV's transmission capability is powered by a ``free lunch'', i.e., energy carried by radio signal from the source. Such a problem is more complicated since in addition to optimizing the UAV's power profile and trajectory, which has been investigated in \cite{7572068}, the UAV's power-splitting ratio profile must be taken into account as well. Besides, it is also noteworthy that in this paper, we focus on a more general cooperative communication system with both AF and DF protocols, which involves the direct link between the source and the destination, while only a relay system without the direct link is considered in \cite{7572068}.

The rest of this paper is organized as follows. Related work is reviewed in Section II. In Section III, the UAV-assisted cooperative communication system is described in detail and the end-to-end cooperative throughput optimization problem is mathematically formulated and decomposed into two subproblems, which in turn, are investigated in Section IV for profile optimization with fixed trajectory and Section V for trajectory optimization with fixed profile. In Section VI, an iterative algorithm is proposed to jointly optimize  the UAV's power profile, power-splitting ratio profile and trajectory based on the analytical results in Section IV and V. Numerical results are presented in Section VII to compare the proposed optimal solution with the greedy strategy from our previous work as well as the static strategy. Finally, we conclude this paper in Section VIII.

\section{Related Work}

For UAV-assisted wireless networks, there has been a great deal of research work on the UAV's placement and deployment, where the UAVs serve as static base stations (also known as low-altitude platforms) located in the air to support ground users within an area poor in wireless connectivity. In \cite{6863654}, the optimal altitude of a single low-altitude UAV is investigated to provide maximum coverage for ground users. The authors in \cite{7486987} focus on deployment for multiple UAVs to enhance coverage by considering directional antennas for the UAVs. The authors in \cite{7762053} study a similar problem from a different angle, where the number of UAV-mounted mobile base stations is minimized while ensuring that all the ground terminals can lie within the communication range of at least one mobile base station. Besides, \cite{7510820} explores three-dimensional space and investigates the placement problem for aerial base stations to maximize the revenue, which is defined by the maximum number of users covered by a drone-cell. In general, research work in this line focuses on wireless networks assisted by static UAVs without taking their high mobility into account.

To exploit the UAV’s high mobility, flexibility and manoeuvrability in order to enhance communication performance, more work on the UAV's trajectory design as well as user scheduling can be recently found. In \cite{7572068}, the authors investigate the throughput maximization problem in a relay system served by a UAV and source/relay transmit power and the UAV's trajectory are jointly optimized. Instead of throughput, \cite{7888557} focuses on the UAV's trajectory design to maximize energy efficiency by taking the UAV's propulsion energy consumption into account. The authors in \cite{DBLP:journals/corr/WuZZ17b} extend the trajectory design problem to a wireless network with multiple UAVs, where the UAV's trajectory and power as well as user association are jointly optimized. In \cite{7556368}, a cyclical pattern is proposed for the UAV's trajectory to exploit periodic channel variations and a cyclical multiple access scheme is presented for ground user scheduling to maximize the minimum throughput, which reveals a fundamental tradeoff between throughput and access delay. 

\begin{figure}
\centering
 \includegraphics[width=4.0in]{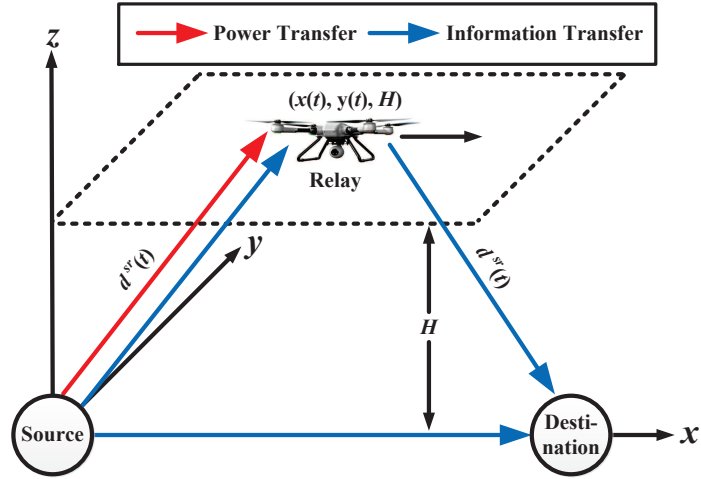}
\caption{A UAV-assisted cooperative communication system, where the UAV's transmission capability is exclusively powered by radio signal transmitted from the source.}
\label{fig:scenario}
\vspace*{-0.15in}
\end{figure}

\section{System Model and Problem Formulation}
As shown in Fig. \ref{fig:scenario}, we consider a point-to-point wireless communication system with one source node and one destination node, which are fixed at two different locations on the ground. Channel condition of the direct link between the source and the destination is undesirable and unable to afford transmission with acceptable performance, e.g., due to obstacles located in line-of-sight area. Thus, a UAV mounted with a transceiver is deployed as an aerial relay to assist communication between the source and the destination. The source, destination, UAV transceiver are all equipped with one antenna. Since most commercial UAVs are energy-constrained due to their limited power capacity of batteries, excessive power consumption for high-performance transmission will undoubtedly drain their batteries and further shorten the operation lifetime. To tackle such an issue, we propose to combine wireless information and power transfer with UAV-assisted cooperative communications, i.e., UAV's wireless transmission capability is exclusively powered by radio signal transmitted from the source while UAV's maneuvering such as taking-off, landing and hovering is still powered by its on-board battery, which can be practically implemented by installing two individual batteries for transceiving and maneuvering, respectively. Such a SWIPT mechanism is able to efficiently lighten the UAV's energy burden by scavenging wireless energy carried by radio signal.

Without loss of generality, we consider a three dimensional Cartesian coordinate system for locations of both source and destination, which are denoted by $[S_{x},S_{y},0]$ and $[D_{x},D_{y},0]$, respectively. We assume that the UAV is flying over the air at a fixed altitude $H$ (e.g., the minimum altitude required for maneuvering without terrain blockage) from a start location to an end location, where supply centers (e.g., charging stations) are located, and serves as a mobile relay for a finite time horizon $T$. For simplicity, the UAV's taking-off and landing are not taken into account and we only focus on the operation period while the UAV is traveling on the aerial plane at altitude $H$. The UAV's instantaneous coordinate is denoted by $[x(t),y(t),H]$ for $0\leq t\leq T$. Specially, the start location $[x_{s},y_{s},H]$ and the end location $[x_{e},y_{e},H]$ are predefined. Hence, the minimum required travel distance from the start location to the end location within time horizon $T$ is given by $d_{min}=\sqrt{(x_{e}-x_{s})^{2}+(y_{e}-y_{s})^{2}}$. The UAV's maximum velocity is denoted by $\overline{V}$ and we assume that $\overline{V}\geq d_{min}/T$ such that the UAV is at least able to find a feasible path from the start location to the end location. 

In general, the UAV's location varies constantly over time while flying along a specific trajectory, which makes the instantaneous cooperative rate between source and destination does as well. However, tedious integral operations have to be involved in precisely computing the end-to-end cooperative throughput while the UAV travels from the start location to the end location. To simplify derivation, we resort to numerical evaluation to approximate the end-to-end cooperative throughput. We discretize the time horizon $T$ into $N$ equally spaced time slots, i.e., $T=N\delta_{t}$, where $\delta_{t}$ denotes duration of each time slot. In practice, $\delta_{t}$ is chosen to be sufficiently small (or equivalently, $N$ is chosen to be sufficiently big) such that we can convincingly assume that the UAV's location stays approximately immobile within each time slot. Hence, the UAV's entire aerial trajectory $(x_{t},y_{t})$ within time horizon $T$ can be approximated with a sequence $[x_{n},y_{n}]$ for $n=1,...,N$, where $[x_{n},y_{n}]$ denotes coordinate of the UAV's location within time slot $n$ and the height coordinate is omitted since the UAV flies at a fixed altitude. Therefore, the UAV's mobility constraints are given by
\begin{equation}
\begin{aligned}
&(x_{1}-x_{s})^{2}+(y_{1}-y_{s})^{2}\leq V^{2}\\
&(x_{n+1}-x_{n})^{2}+(y_{n+1}-y_{n})^{2}\leq V^{2},\quad\forall n=1,...,N-1\\
&(x_{e}-x_{N})^{2}+(y_{e}-y_{N})^{2}\leq V^{2}
\end{aligned}
\end{equation}
where $V\triangleq\overline{V}\delta _{t}$ denotes the UAV's maximum travel distance within each time slot.

We assume that the source-to-UAV and UAV-to-destination channels are dominated by line-of-sight (LOS) links and Doppler effect resulted from the UAV's mobility is able to be compensated perfectly\cite{7572068}. Therefore, within time slot $n$, channel power gain of both source-to-UAV and UAV-to-destination channels follows the free-space path loss model, which are given by
\begin{equation}
h^{sr}_{n}=\dfrac{\beta_{0}}{(d^{sr}_{n})^{2}}=\dfrac{\beta_{0}}{(x_{n}-S_{x})^{2}+(y_{n}-S_{y})^{2}+H^{2}}
\end{equation}
for the source-to-UAV channel and 
\begin{equation}
h^{rd}_{n}=\dfrac{\beta_{0}}{(d^{rd}_{n})^{2}}=\dfrac{\beta_{0}}{(x_{n}-D_{x})^{2}+(y_{n}-D_{y})^{2}+H^{2}}
\end{equation}
for the UAV-to-destination channel, respectively. Here $\beta_{0}$ denotes the channel power gain at the reference distance, e.g., $1$ meter, while $d^{sr}_{n}$ and $d^{rd}_{n}$ denote the source-to-UAV distance and UAV-to-destination distance, respectively. Let $\delta$ and $\delta_{s}$ denote channel noise power and signal processing noise power at the power splitter of the UAV, respectively, and we further define $\gamma_{0}\triangleq\beta_{0}/\delta$ as the reference signal-to-noise ratio (SNR) for source-to-UAV and UAV-to-destination channels and $a\triangleq\delta_{s}/\delta$ as the relative signal processing noise power to simplify derivation. 

\begin{figure}
\centering
 \includegraphics[width=5.0in]{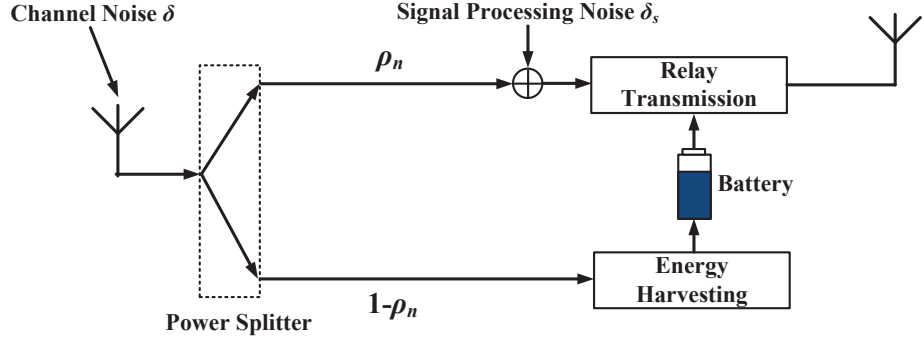}
\caption{Block diagram of the UAV's receiver structure for power-splitting wireless information and power transfer within time slot $n$.}
\label{fig:Power_Split}
\vspace*{-0.15in}
\end{figure}

To serve as a relay, the UAV operates in a time-division half-duplex mode, which is well known due to current limitation in radio implementation. To be more specific, the source transmits data to both the destination and UAV in the first half of each time slot, and then in the second half, the UAV forwards the data to the destination\cite{1362898}. In practice, signal used for relaying modulated information cannot be used for harvesting energy due to hardware limitations\cite{6623062}. Therefore, we consider the power-splitting receiver structure for SWIPT at the UAV, which is one of dominant SWPIT receiver structures. With the power-splitting scheme, an energy harvesting unit and a conventional signal processing core unit are built in the UAV for simultaneous energy harvesting and information relaying. Fig. \ref{fig:Power_Split} shows a typical design of such a structure (a similar block diagram of transceivers for wireless information and power transfer can be found in \cite{6661329}). Within time slot $n$, the received signal at the UAV is split into two power streams in the first half of time slot, which are used for scavenging wireless energy and receiving information from the source with power-splitting ratio $\rho_{n}$ and $1-\rho_{n}$ ($0\leq\rho_{n}\leq 1$), respectively. The harvested energy is used to replenish the UAV's battery for wireless transmission, which is responsible exclusively for information relaying in the second half of the time slot. Hence, within time slot $n$, signal power and noise (including channel noise and signal processing noise) power at the signal processing unit of the UAV are given by $P_{s}\beta_{0}\rho_{n}/(d^{sr}_{n})^{2}$ and $\rho_{n}\delta +\delta_{s}$, respectively. Then SNR at the signal processing unit of the UAV after power splitting can be expressed as $P_{s}\gamma_{0}\rho_{n}/[(\rho_{n}+a)(d^{sr}_{n})^{2}]$. Since not only radio signal from the source but also channel noise accounts for the energy harvested by the UAV, the total energy harvested by the UAV within time slot $n$ amounts to $(P_{s}\beta_{0}/(d^{sr}_{n})^{2}+\delta)(1-\rho_{n})$. Then according to \cite{1362898}, the instantaneous cooperative rate within time slot $n$ is given by
\begin{equation}\label{eqt:af}
r_{n}=\dfrac{1}{2}\log(1+P_{s}\gamma +\dfrac{\dfrac{P_{s}\gamma_{0}^{2}p_{n}\rho_{n}}{(\rho_{n}+a)(d^{sr}_{n}d^{rd}_{n})^{2}}}{1+\dfrac{P_{s}\gamma_{0}\rho_{n}}{(\rho_{n}+a)(d^{sr}_{n})^{2}}+ \dfrac{p_{n}\gamma_{0}}{(d^{rd}_{n})^{2}})}
\end{equation}
for the AF protocol and 
\begin{equation}\label{eqt:df}
r_{n}=\dfrac{1}{2}\min\{\log (1+\dfrac{P_{s}\gamma_{0}\rho_{n}}{(\rho_{n}+a)(d^{sr}_{n})^{2}}), \log (1+P_{s}\gamma+\dfrac{p_{n}\gamma_{0}}{(d^{rd}_{n})^{2}})\}  
\end{equation}
for the DF protocol, respectively, where $P_{s}$ refers to the constant transmission power at the source, $\gamma$ refers to the reference SNR of the source-to-destination channel and $p_{n}$ denotes the UAV's power profile, i.e., relay power within time slot $n$. Within any time slot, the UAV can only use energy that has been harvested in previous time slots for relaying rather than that in future. Therefore, the following energy-causality constraint\cite{5992841} must hold\footnote{We assume that capacity of the UAV's energy storage is sufficient without a overflow.}:
\begin{equation}\label{eqt:energy_causality}
\sum\limits _{i=1}^{n}p_{i}\leq\sum\limits _{i=1}^{n}(1+\dfrac{P_{s}\gamma_{0}}{(d^{sr}_{i})^{2}})\delta(1-\rho_{i}),\quad \forall n=1,...,N
\end{equation}

In this paper, we aim at maximizing the end-to-end cooperative throughput while the UAV travels from the start location to the end location by properly choosing the UAV's power profile $p_{n}$, power-splitting ratio profile $\rho_{n}$ and trajectory $[x_{n},y_{n}]$ for $n=1,...,N$. Such a problem can be formulated as 
\begin{equation}\label{eqt:problem}
\begin{aligned}
\max\limits_{p_{n},\rho_{n},[x_{n},y_{n}]}\quad &\sum\limits _{n=1}^{N}r_{n}\\
{\rm s.t.}\quad &\sum\limits _{i=1}^{n}p_{i}\leq\sum\limits _{i=1}^{n}(1+\dfrac{P_{s}\gamma_{0}}{(d^{sr}_{i})^{2}})\delta(1-\rho_{i}),\quad \forall n=1,...,N\\
&p_{n}\geq 0,\quad 0\leq\rho_{n}\leq 1, \quad \forall n=1,...,N\\
&(x_{1}-x_{s})^{2}+(y_{1}-y_{s})^{2}\leq V^{2}\\
&(x_{n+1}-x_{n})^{2}+(y_{n+1}-y_{n})^{2}\leq V^{2},\quad\forall n=1,...,N-1\\
&(x_{e}-x_{N})^{2}+(y_{e}-y_{N})^{2}\leq V^{2}
\end{aligned}
\end{equation}
Here we assume that transmission on source-to-UAV, UAV-to-destination as well as source-to-destination channels share equal bandwidth. Obviously, (\ref{eqt:problem}) is a non-convex optimization problem, which thus cannot be simply solved with standard convex optimization techniques. In the following sections, we resort to alternate optimization by decomposing (\ref{eqt:problem}) into two subproblems: optimizing power profile and power-splitting ratio profile with fixed trajectory and optimizing trajectory with fixed power profile and power-splitting ratio profile. Based on solutions to the two subproblems, the end-to-end cooperative throughput can be thus iteratively improved until convergence.

\section{Optimizing Power Profile and Power-splitting Ratio Profile}\label{sec:3}
In this section, we focus on the first subproblem of (\ref{eqt:problem}) for both AF and DF protocols, i.e., optimizing the UAV's power profile $p_{n}$ and power-splitting ratio profile  $\rho_{n}$ given its trajectory $[x_{n},y_{n}]$. The analytical results in this section also apply to practical cases, in which the UAV's travel trajectory is predefined for specific tasks instead of being optimized for cooperative transmission performance. Given the UAV's trajectory, (\ref{eqt:problem}) can be reformulated as 
\begin{equation}\label{eqt:problem_profile}
\begin{aligned}
\max\limits_{p_{n},\rho_{n}}\quad &\sum\limits _{n=1}^{N}r_{n}\\
{\rm s.t.}\quad &\sum\limits _{i=1}^{n}p_{i}\leq\sum\limits _{i=1}^{n}(1+\dfrac{P_{s}\gamma_{0}}{(d^{sr}_{i})^{2}})\delta(1-\rho_{i}),\quad \forall n=1,...,N\\
&p_{n}\geq 0,\quad 0\leq\rho_{n}\leq 1, \quad \forall n=1,...,N
\end{aligned}
\end{equation}

\begin{theorem}\label{theorem:concavity}
The objective function in (\ref{eqt:problem_profile}) is concave with respect to $p_{n}$ and $\rho_{n}$ for both AF and DF protocols.
\end{theorem}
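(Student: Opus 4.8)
The plan is to use that a sum of concave functions is concave, so it suffices to prove that each per-slot rate $r_n$ is jointly concave in $(p_n,\rho_n)$; since the trajectory is fixed, the distances $d^{sr}_n,d^{rd}_n$ are constants here. Throughout I abbreviate the post-splitting source-to-UAV SNR and the UAV-to-destination SNR as
$$\gamma_1=\frac{P_s\gamma_0\rho_n}{(\rho_n+a)(d^{sr}_n)^2},\qquad \gamma_2=\frac{p_n\gamma_0}{(d^{rd}_n)^2},$$
so the AF term is $\tfrac12\log\!\big(1+P_s\gamma+\gamma_1\gamma_2/(1+\gamma_1+\gamma_2)\big)$ and the DF term is $\tfrac12\min\{\log(1+\gamma_1),\log(1+P_s\gamma+\gamma_2)\}$. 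The structural facts I would record first are that $\gamma_2$ is affine in $p_n$, whereas $\gamma_1=\frac{P_s\gamma_0}{(d^{sr}_n)^2}\cdot\frac{\rho_n}{\rho_n+a}$ is increasing and strictly concave in $\rho_n$, since $\rho\mapsto\rho/(\rho+a)$ has second derivative $-2a/(\rho+a)^3<0$.

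For DF the argument is immediate. The branch $\log(1+\gamma_1)$ is the composition of the increasing concave map $x\mapsto\log(1+\frac{P_s\gamma_0}{(d^{sr}_n)^2}x)$ with the concave function $\rho_n/(\rho_n+a)$, hence concave in $\rho_n$; the branch $\log(1+P_s\gamma+\gamma_2)$ is the logarithm of an affine function of $p_n$, hence concave. A pointwise minimum of jointly concave functions is jointly concave, so the DF case is finished.

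For AF the plan is to show the $2\times2$ Hessian of $r_n$ in $(\rho_n,p_n)$ is negative semidefinite. The two diagonal entries are nonpositive because $r_n$ is concave in each variable separately: in $p_n$ because $\gamma_2$ is affine and $\gamma_1\gamma_2/(1+\gamma_1+\gamma_2)$ is concave increasing in $\gamma_2$, and in $\rho_n$ by composing that same concave-increasing dependence with the concave map $\gamma_1(\rho_n)$. The decisive quantity is therefore the determinant, and this is where I expect the real difficulty. Viewing $r_n$ as $F(\gamma_1,\gamma_2)$ composed with $\gamma_1(\rho_n)$ and $\gamma_2(p_n)$, the chain rule turns the determinant into $(\gamma_2')^2\big[(\gamma_1')^2\det\nabla^2F+F_{\gamma_1}\gamma_1''F_{\gamma_2\gamma_2}\big]$, so everything reduces to the sign of the bracket.

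The hard part will be that $F$ is \emph{not} itself jointly concave in $(\gamma_1,\gamma_2)$: a direct second-derivative computation shows $\det\nabla^2F$ has the same sign as $2\gamma_1(1+\gamma_1)\gamma_2(1+\gamma_2)-\phi(1+\gamma_1+\gamma_2)$, where $\phi$ denotes the argument of the logarithm, and this is negative whenever the two SNRs are small. Joint concavity in the true variables must therefore be recovered solely from the strict concavity of $\gamma_1(\rho_n)$: since $F_{\gamma_1}\ge0$, $\gamma_1''<0$ and $F_{\gamma_2\gamma_2}\le0$, the term $F_{\gamma_1}\gamma_1''F_{\gamma_2\gamma_2}$ is nonnegative and can offset the possibly negative $(\gamma_1')^2\det\nabla^2F$. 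The crux of the whole theorem is thus the inequality
$$(\gamma_1')^2\det\nabla^2F+F_{\gamma_1}\gamma_1''F_{\gamma_2\gamma_2}\ \ge\ 0 ,$$
and I would attack it by inserting the closed forms $\gamma_1'=\frac{P_s\gamma_0}{(d^{sr}_n)^2}\frac{a}{(\rho_n+a)^2}$ and $\gamma_1''$ together with the explicit derivatives of $F$, clearing the common factor $(1+\gamma_1+\gamma_2)$, and reducing the bracket to a single polynomial inequality in $\rho_n$, $p_n$ and the channel constants. Certifying that this compensating curvature really dominates over the entire feasible box $0\le\rho_n\le1,\ p_n\ge0$ is the step I expect to be by far the most delicate, precisely because the naive concavity in the SNR coordinates already fails, so the whole burden rests on the concavity of the power-splitting map.
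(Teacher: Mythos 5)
Your DF argument is correct and coincides with the paper's own (composition rules plus preservation of concavity under pointwise minimum). For AF, your structural diagnosis is also correct, and it is in fact sharper than the paper's proof: the function $F(\gamma_1,\gamma_2)=\tfrac12\log\bigl(1+P_s\gamma+\tfrac{\gamma_1\gamma_2}{1+\gamma_1+\gamma_2}\bigr)$ is indeed \emph{not} jointly concave in the SNR pair. The paper asserts the opposite by reducing to $f(x,y)=xy/(1+x+y)$ and exhibiting its Hessian, but the mixed partial there is miscomputed: the correct value is $f_{xy}=(1+x+y+2xy)/(1+x+y)^3$, not $2xy/(1+x+y)^3$, which gives $\det\nabla^2 f=-1/(1+x+y)^4<0$; concretely, $f(1,1)=\tfrac13<\tfrac12\bigl[f(0,0)+f(2,2)\bigr]=\tfrac25$, so $f$ is saddle-shaped and the paper's composition argument collapses. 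Your sign formula for $\det\nabla^2F$ agrees with a direct computation, so up to that point your framework is the right one.

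The genuine gap is precisely the step you defer, and it cannot be repaired: the inequality $(\gamma_1')^2\det\nabla^2F+F_{\gamma_1}\gamma_1''F_{\gamma_2\gamma_2}\ge 0$ is false, and with it the AF half of the theorem itself. At $\gamma_2=0$ (i.e., $p_n=0$), $F_{\gamma_1}$ vanishes because it carries the factor $\gamma_2(1+\gamma_2)$, so your compensating term is identically zero, while $(\gamma_1')^2\det\nabla^2F<0$ since your sign expression reduces to $-\phi(1+\gamma_1)<0$ and $\gamma_1'=P_s\gamma_0 a/[(\rho_n+a)^2(d^{sr}_n)^2]>0$. Hence the Hessian of $r_n$ in $(\rho_n,p_n)$ is indefinite at $p_n=0$ and, by continuity, at interior points with small $p_n>0$. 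A numerical violation with the paper's own parameters ($P_s=\gamma_0=\delta=1$, $a=2$, $\gamma=0.01$, $d^{sr}_n=d^{rd}_n=1$): the collinear, equally spaced points $(\rho_n,p_n)=(0.3,0),\,(0.5,0.01),\,(0.7,0.02)$ give log-arguments $1.01$, $1.0116529$, $1.0140533$, and midpoint concavity would require
\[
(1.0116529)^2\;\ge\;1.01\times 1.0140533,
\]
yet the left side is $1.023442$ while the right side is $1.024194$. The intuition is that near $p_n=0$ the rate equals a constant plus (to first order) a positive multiple of $\tfrac{\gamma_1(\rho_n)}{1+\gamma_1(\rho_n)}\,\gamma_2(p_n)$, a product of increasing functions of the two separate variables, i.e., a saddle; the curvature of the power-splitting map cannot rescue this because it enters only through the factor $F_{\gamma_1}\to 0$. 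So neither your route nor any other can establish the AF statement. What is actually true, and what both your diagonal-entry argument and the paper's Appendix B legitimately show, is concavity in $p_n$ for fixed $\rho_n$ and in $\rho_n$ for fixed $p_n$; this supports the coordinate-wise updates in Section IV, but not joint concavity, nor the strong-duality and global-optimality claims built upon it.
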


\begin{proof}
Please refer to Appendix \ref{app:a} .
\end{proof}

Theorem \ref{theorem:concavity} shows that (\ref{eqt:problem_profile}) is convex for both AF and DF protocols. Therefore, the optimal solution can be derived via the Lagrange dual method.

\subsection{Amplify-and-Forward} 
For AF protocol, it can be verified that (\ref{eqt:problem_profile}) satisfies the Slater’s condition such that strong duality holds and the optimal solution can be derived via solving its dual problem\cite{Boyd:2004:CO:993483}. The Lagrangian of (\ref{eqt:problem_profile}) for AF protocol is given by
\begin{equation}
\begin{aligned}
L_{a}(p_{n},\rho_{n},\lambda_{n})&=\sum\limits_{n=1}^{N}r_{n}+\sum\limits_{n=1}^{N}\lambda_{n}(\sum\limits _{i=1}^{n}(1+\dfrac{P_{s}\gamma_{0}}{(d^{sr}_{i})^{2}})\delta(1-\rho_{i})-p_{i})\\
&=\sum\limits_{n=1}^{N}[r_{n}+\alpha_{n}((1+\dfrac{P_{s}\gamma_{0}}{(d^{sr}_{n})^{2}})\delta(1-\rho_{n})-p_{n})]
\end{aligned}
\end{equation}
where $\lambda_{n}$ refers to the dual variables for the first constraint in (\ref{eqt:problem_profile}) and $\alpha_{n}$ is defined as
\begin{equation}
\alpha_{n}=\sum\limits_{i=n}^{N}\lambda_{n}
\end{equation}
Then the Lagrange dual function of (\ref{eqt:problem_profile}) can be expressed as 
\begin{equation}\label{eqt:problem_profile_dual}
\max\limits_{p_{n}\geq 0,0\leq\rho_{n}\leq 1}\quad g(\lambda_{n})=L_{a}(p_{n},\rho_{n},\lambda_{n})
\end{equation}
and solving (\ref{eqt:problem_profile}) is equivalent to solving its dual problem, which is defined as $\min _{\lambda_{n}\geq 0}g(\lambda _{n})$. In the following, we first solve (\ref{eqt:problem_profile_dual}) with fixed dual variables and then derive the optimal dual variables that minimize the dual problem of (\ref{eqt:problem_profile}).

Given dual variables $\lambda_{n}$, (\ref{eqt:problem_profile_dual}) can be decomposed into $N$ parallel subproblems, which are given by
\begin{equation}\label{eqt:parallel_af}
\max\limits_{p_{n}\geq 0,0\leq\rho_{n}\leq 1}\quad f(p_{n},\rho_{n})=r_{n}+\alpha_{n}[(1+\dfrac{P_{s}\gamma_{0}}{(d^{sr}_{n})^{2}})\delta(1-\rho_{n})-p_{n}]
\end{equation}
for $n=1,...,N$.
According to Theorem \ref{theorem:concavity}, (\ref{eqt:parallel_af}) is convex for each $n$ such that there exists a unique global optimal solution $p^{*}_{n}$ and $\rho^{*}_{n}$, which satisfy the following two conditions:
\begin{equation}\label{eqt:p_af}
p^{*}_{n}=\max\{p^{c}_{n},0\}
\end{equation}
and
\begin{equation}\label{eqt:rho_af}
\rho^{*}_{n}=\left\{
\begin{aligned}
&\rho^{c}_{n},\quad 0\leq\rho^{c}_{n}\leq 1\\
&\mathop{\argmax}_{\rho_{n}\in\{0,1\}}f(p^{*}_{n},\rho_{n}),\quad\text{otherwise}
\end{aligned}
\right.
\end{equation}
where the candidate solution (or stationary point) $p^{c}_{n}$ and $\rho^{c}_{n}$ are defined as
\begin{equation}\label{eqt:candi_p_af}
p^{c}_{n}=\dfrac{b_{n,4}+\sqrt{b_{n,4}^{2}+\dfrac{2b_{n,1}b_{n,2}b_{n,4}}{b_{n,5}}}}{2b_{n,1}b_{n,2}}-\dfrac{b_{n,3}}{b_{n,2}}
\end{equation}
and 
\begin{equation}\label{eqt:candi_rho_af}
\rho^{c}_{n}=\dfrac{c_{n,4}+\sqrt{c_{n,4}^{2}+\dfrac{2c_{n,1}c_{n,2}c_{n,4}}{c_{n,5}}}}{2c_{n,1}c_{n,2}}-\dfrac{c_{n,3}}{c_{n,2}}
\end{equation}
respectively. Detailed proof for (\ref{eqt:p_af}) and (\ref{eqt:rho_af}) as well as definition of $b_{n,i}$ in (\ref{eqt:candi_p_af}) and $c_{n,i}$ in (\ref{eqt:candi_rho_af}) for $i=1,...,5$ are given in Appendix \ref{app:b}. Instead of explicitly solving the equation system composed of (\ref{eqt:p_af}) and (\ref{eqt:rho_af}), one can resort to alternate optimization to obtain the optimal solution, i.e., iteratively update $p_{n}$ following (\ref{eqt:p_af}) with $\rho_{n}$ fixed and update $\rho_{n}$ following (\ref{eqt:rho_af}) with $p_{n}$ fixed. Since each subproblem in (\ref{eqt:parallel_af}) is convex, such a process is bound to make iteratively updated $p_{n}$ and $\rho_{n}$ converge to the global optimal solution.

After obtaining the optimal power profile $p_{n}$ and power-splitting ratio profile $\rho_{n}$ given dual variables $\lambda_{n}$, the dual problem can be efficiently solved via subgradient-based methods, e.g, the ellipsoid method\cite{6760603}, and one feasible subgradient $\bm{d}$ is given by $d_{n}=\sum\limits _{i=1}^{n}(1+\dfrac{P_{s}\gamma_{0}}{(d^{sr}_{i})^{2}})\delta(1-\rho^{*}_{i})-p^{*}_{i}$ for $n=1,...,N$.

\subsection{Decode-and-Forward}
For DF protocol, (\ref{eqt:problem_profile}) can be equivalently formulated as 
\begin{equation}\label{eqt:problem_profile_df}
\begin{aligned}
\max\limits_{p_{n},\rho_{n},\overline{r}_{n}}\quad &\sum\limits _{n=1}^{N}\overline{r}_{n}\\
{\rm s.t.}\quad &\overline{r}_{n}\leq\dfrac{1}{2}\log (1+\dfrac{P_{s}\gamma_{0}\rho_{n}}{(\rho_{n}+a)(d^{sr}_{n})^{2}}),\quad\forall n=1,...,N\\
&\overline{r}_{n}\leq\dfrac{1}{2}\log (1+P_{s}\gamma+\dfrac{p_{n}\gamma_{0}}{(d^{rd}_{n})^{2}}),\quad\forall n=1,...,N\\
&\sum\limits _{i=1}^{n}p_{i}\leq\sum\limits _{i=1}^{n}(1+\dfrac{P_{s}\gamma_{0}}{(d^{sr}_{i})^{2}})\delta(1-\rho_{i}),\quad\forall n=1,...,N\\
&p_{n}\geq 0,\quad 0\leq\rho_{n}\leq 1, \quad \forall n=1,...,N
\end{aligned}
\end{equation}
by introducing variables $\overline{r}_{n}$. Similar with AF protocol, the Lagrangian of (\ref{eqt:problem_profile_df}) is given by
\begin{equation}\label{eqt:Lag_df}
\begin{aligned}
L_{d}(p_{n},\rho_{n},\overline{r}_{n},\lambda_{n},\theta^{1}_{n},\theta^{2}_{n})&=\sum\limits_{n=1}^{N}[\overline{r}_{n}+\alpha_{n}((1+\dfrac{P_{s}\gamma_{0}}{(d^{sr}_{n})^{2}})\delta(1-\rho_{n})-p_{n})\\
&+\theta^{1}_{n}(\dfrac{1}{2}\log (1+\dfrac{P_{s}\gamma_{0}\rho_{n}}{(\rho_{n}+a)(d^{sr}_{n})^{2}})-\overline{r}_{n})+\theta^{2}_{n}(\dfrac{1}{2}\log (1+P_{s}\gamma+\dfrac{p_{n}\gamma_{0}}{(d^{rd}_{n})^{2}})-\overline{r}_{n})]
\end{aligned}
\end{equation}
where $\theta^{1}_{n}$ and $\theta^{2}_{n}$ refer to the dual variables for the first two constraints in (\ref{eqt:problem_profile_df}). Then the Lagrange dual function of (\ref{eqt:problem_profile_df}) can be expressed as 
\begin{equation}\label{eqt:problem_profile_dual2}
\max\limits_{p_{n}\geq 0,0\leq\rho_{n}\leq 1,\overline{r}_{n}}\quad g(\lambda_{n},\theta^{1}_{n},\theta^{2}_{n})=L_{d}(p_{n},\rho_{n},\overline{r}_{n},\lambda_{n},\theta^{1}_{n},\theta^{2}_{n})
\end{equation}
and solving (\ref{eqt:problem_profile_df}) is equivalent to solving its dual problem defined as $\min _{\lambda_{n},\theta^{1}_{n},\theta^{2}_{n}\geq 0}g(\lambda _{n},\theta^{1}_{n},\theta^{2}_{n})$. In the following, we first solve (\ref{eqt:problem_profile_dual2}) with fixed dual variables and then derive the optimal dual variables that minimize the dual problem.

It is easy to find that $\theta^{1}_{n}+\theta^{2}_{n}=1$ must hold for $n=1,...,N$, otherwise it leads to unbounded end-to-end cooperative throughput (which can also be inferred from the Karush-Kuhn-Tucker optimality conditions). Thus, (\ref{eqt:Lag_df}) can be rewritten as
\begin{equation}\label{eqt:Lag_df2}
\begin{aligned}
L_{d}(p_{n},\rho_{n},\lambda_{n},\theta_{n})&=\sum\limits_{n=1}^{N}[\alpha_{n}((1+\dfrac{P_{s}\gamma_{0}}{(d^{sr}_{n})^{2}})\delta(1-\rho_{n})-p_{n})\\
&+\dfrac{\theta_{n}}{2}\log (1+\dfrac{P_{s}\gamma_{0}\rho_{n}}{(\rho_{n}+a)(d^{sr}_{n})^{2}})+\dfrac{1-\theta_{n}}{2}\log (1+P_{s}\gamma+\dfrac{p_{n}\gamma_{0}}{(d^{rd}_{n})^{2}})]
\end{aligned}
\end{equation}
where $\theta_{n}=\theta^{1}_{n}=1-\theta^{2}_{n}$. Then given dual variables $\lambda_{n}$ and $\theta_{n}$, (\ref{eqt:problem_profile_df}) can be decomposed into $2N$ parallel subproblems as well, which are given by
\begin{equation}\label{eqt:parallel_df}
\left.\{
\begin{aligned}
&\max\limits_{p_{n}\geq 0}\quad f_{1}(p_{n})=\dfrac{1-\theta_{n}}{2}\log (1+P_{s}\gamma+\dfrac{p_{n}\gamma_{0}}{(d^{rd}_{n})^{2}})-\alpha_{n}p_{n}\\
&\max\limits_{0\leq\rho_{n}\leq 1}\quad f_{2}(\rho_{n})=\alpha_{n}(1+\dfrac{P_{s}\gamma_{0}}{(d^{sr}_{n})^{2}})\delta(1-\rho_{n})+\dfrac{\theta_{n}}{2}\log (1+\dfrac{P_{s}\gamma_{0}\rho_{n}}{(\rho_{n}+a)(d^{sr}_{n})^{2}})
\end{aligned}
\right.
\end{equation}
for $n=1,...,N$. Obviously, the two objective functions in (\ref{eqt:parallel_df}) are concave such that the optimal $p^{*}_{n}$ and $\rho^{*}_{n}$ can be derived as 
\begin{equation}\label{eqt:p_df}
p^{*}_{n}=\max\{p^{c}_{n},0\}
\end{equation}
and
\begin{equation}\label{eqt:rho_df}
\rho^{*}_{n}=\left.\{
\begin{aligned}
&\rho^{c}_{n},\quad 0\leq\rho^{c}_{n}\leq 1\\
&\mathop{\argmax}_{\rho_{n}\in\{0,1\}}f(p^{*}_{n},\rho_{n}),\quad\text{otherwise}
\end{aligned}
\right.
\end{equation}
where the candidate solution (or stationary point) $p^{c}_{n}$ and $\rho^{c}_{n}$ are defined as
\begin{equation}\label{eqt:candi_p_df}
p^{c}_{n}=\dfrac{1-\theta_{n}}{2\alpha_{n}}-\dfrac{(1+P_{s}\gamma)(d^{rd}_{n})^{2}}{\gamma_{0}}
\end{equation}
and 
\begin{equation}\label{eqt:candi_rho_df}
\rho^{c}_{n}=\dfrac{\dfrac{\alpha_{n}\delta aP_{s}\gamma_{0}}{\theta_{n}(d^{sr}_{n})^{2}}    
+\sqrt{(\dfrac{\alpha_{n}\delta aP_{s}\gamma_{0}}{\theta_{n}(d^{sr}_{n})^{2}})^2+\dfrac{2\alpha_{n}\delta aP_{s}\gamma_{0}}{\theta_{n}(d^{sr}_{n})^{2}}}}{\dfrac{2\alpha_{n}\delta}{\theta_{n}}(1+\dfrac{P_{s}\gamma_{0}}{(d^{sr}_{n})^{2}})}-a
\end{equation}
respectively. Please refer to Appendix \ref{app:c} for proof of (\ref{eqt:p_df}) and (\ref{eqt:rho_df}).

Similar with the AF protocol, the dual problem can be solved via subgradient-based methods as well and the  subgradient $\bm{d}$ can be defined as 
\begin{equation}
d_{n}=\left\{
\begin{aligned}
&\sum\limits _{i=1}^{n}(1+\dfrac{P_{s}\gamma_{0}}{(d^{sr}_{i})^{2}})\delta(1-\rho^{*}_{i})-p^{*}_{i},\quad n=1,...,N\\
&\dfrac{1}{2}[\log (1+\dfrac{P_{s}\gamma_{0}\rho^{*}_{n}}{(\rho^{*}_{n}+a)(d^{sr}_{n})^{2}})-\log (1+P_{s}\gamma+\dfrac{p^{*}_{n}\gamma_{0}}{(d^{rd}_{n})^{2}})],\quad n=N+1,...,2N
\end{aligned}
\right.
\end{equation}

Thus, the algorithm for optimizing the UAV's power profile and power-splitting ratio profile is summarized in \ref{alg:profile}.

\begin{algorithm}[htb] 
\caption{Profile Optimization with Fixed Trajectory}\label{alg:profile}
\begin{algorithmic}[1]
\STATE Initialize the dual variables $\lambda_{n}\geq 0$ and $0\leq\theta_{n}\leq 1$ (for the DF protocol only) for $n=1,...,N$;
\STATE \textbf{Repeat}
\STATE \quad Derive the optimal power profile and power-splitting ratio profile following (\ref{eqt:p_af}) and (\ref{eqt:rho_af}) for the AF protocol, or (\ref{eqt:p_df}) and (\ref{eqt:rho_df}) for the DF protocol;
\STATE \quad Compute the subgradient of the Lagrange dual function;
\STATE \quad Obtain the optimal dual variables  $\lambda^{*}_{n}$ and $\theta^{*}_{n}$ via the ellipsoid method;
\STATE \textbf{Until} Improvement of the objective function value or variation of the optimal dual variables  $\lambda^{*}_{n}$ and $\theta^{*}_{n}$ converges under a predefined tolerance;
\STATE Output $p^{*}_{n}$ and $\rho^{*}_{n}$ along with $\lambda^{*}_{n}$ and $\theta^{*}_{n}$ as the optimal solution to the UAV's power profile and power-splitting ratio profile.
\end{algorithmic}
\end{algorithm}

\section{Optimizing Trajectory}\label{sec:4}
In this section, we focus on the other subproblem of (\ref{eqt:problem}) for both AF and DF protocols, i.e., optimizing the UAV's trajectory $[x_{n},y_{n}]$ given its power profile $p_{n}$ and power-split ratio profile $\rho_{n}$. The analytical results in this section also apply to cases where the UAV operates with predefined transmission power and power-split ratio, e.g., the greedy strategy in \cite{7417077}. Given the UAV's power profile and power-split ratio profile, (\ref{eqt:problem}) can be reformulated as 
\begin{equation}\label{eqt:trajectory}
\begin{aligned}
\max\limits_{[x_{n},y_{n}]}\quad &\sum\limits _{n=1}^{N}r_{n}\\
{\rm s.t.}\quad &(x_{1}-x_{s})^{2}+(y_{1}-y_{s})^{2}\leq V^{2}\\
&(x_{n+1}-x_{n})^{2}+(y_{n+1}-y_{n})^{2}\leq V^{2},\quad\forall n=1,...,N-1\\
&(x_{e}-x_{N})^{2}+(y_{e}-y_{N})^{2}\leq V^{2}
\end{aligned}
\end{equation}
Obviously, (\ref{eqt:trajectory}) is non-convex since the objective function is non-concave over trajectory $[x_{n},y_{n}]$ for both AF and DF protocols. Therefore, we resort to the successive convex optimization technique in \cite{7572068}, which iteratively maximizes a lower bound of (\ref{eqt:trajectory}) by optimizing an incremental trajectory. We specify that after $l$ rounds of iteration, the UAV's trajectory and the cooperative rate are denoted by $[x_{n,l},y_{n,l}]$ and $r_{n,l}$, respectively. Let $[\Delta x_{n,l},\Delta y_{n,l}]$ denote the incremental trajectory, i.e., we have $x_{n,l+1}=x_{n,l}+\Delta x_{n,l}$ and $y_{n,l+1}=y_{n,l}+\Delta y_{n,l}$ for trajectory update  after $l$ rounds of iteration.

\subsection{Amplify-and-Forward} 
To facilitate the following derivation, we first define function $f(z_{1},z_{2})$ as
\begin{equation}\label{eqt:function_z}
f(z_{1},z_{2})=\log(\omega +\dfrac{\dfrac{r_{1}r_{2}}{(A_{1}+z_{1})(A_{2}+z_{2})}}{1+\dfrac{r_{1}}{A_{1}+z_{1}}+\dfrac{r_{2}}{A_{2}+z_{2}}})
\end{equation}
where $\omega >0$, $z_{i}>-A_{i}$ and $r_{i}>0$ for $i\in\{1,2\}$.

\begin{lemma}\label{lemma:convexity}
The function defined in (\ref{eqt:function_z}) is convex with respect to $z_{1}$ and $z_{2}$.
\end{lemma}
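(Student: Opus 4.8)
The plan is to reduce the claim to a statement about \emph{log-convexity}, which behaves well under the outer composition with $\log$. Since convexity is invariant under affine reparametrization, I first substitute $u_{i}=A_{i}+z_{i}>0$ and then $v_{i}=u_{i}+r_{i}>r_{i}>0$; both substitutions are affine in $z_{i}$, so it suffices to prove convexity in the new variables $(v_{1},v_{2})$.

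The next step is a purely algebraic simplification of the compound fraction. Clearing the inner denominators by multiplying numerator and denominator by $u_{1}u_{2}$ turns the argument of the logarithm into $\omega+\frac{r_{1}r_{2}}{u_{1}u_{2}+r_{1}u_{2}+r_{2}u_{1}}$. The key observation is the factorization $(u_{1}+r_{1})(u_{2}+r_{2})=u_{1}u_{2}+r_{1}u_{2}+r_{2}u_{1}+r_{1}r_{2}$, which lets me rewrite the denominator as $v_{1}v_{2}-r_{1}r_{2}$. Writing $c\triangleq r_{1}r_{2}>0$, the target function becomes $f=\log\bigl(\omega+\frac{c}{v_{1}v_{2}-c}\bigr)$ on the region $v_{1}v_{2}>c$, which is guaranteed since $v_{i}>r_{i}>0$.

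Now, by definition $f=\log\psi$ is convex precisely when $\psi\triangleq\omega+\frac{c}{v_{1}v_{2}-c}$ is log-convex. Because log-convexity is preserved under addition and the constant $\omega>0$ is trivially log-convex, it suffices to show that $\frac{c}{v_{1}v_{2}-c}$ is log-convex, i.e. that $g\triangleq\log(v_{1}v_{2}-c)$ is \emph{concave}. This is the crux, and it reduces to a clean $2\times 2$ computation: a direct differentiation gives $\nabla^{2}g=-\frac{1}{(v_{1}v_{2}-c)^{2}}\left(\begin{smallmatrix}v_{2}^{2} & c\\ c & v_{1}^{2}\end{smallmatrix}\right)$, and the inner matrix has positive trace $v_{1}^{2}+v_{2}^{2}$ and positive determinant $(v_{1}v_{2})^{2}-c^{2}=(v_{1}v_{2}-c)(v_{1}v_{2}+c)>0$, hence is positive semidefinite. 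Therefore $\nabla^{2}g\preceq 0$, so $g$ is concave, $\frac{c}{v_{1}v_{2}-c}$ is log-convex, $\psi$ is log-convex, and $f=\log\psi$ is convex.

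I expect the main obstacle to lie in the first two reductions rather than in the final Hessian check: recognizing the factorization that collapses the three-term denominator into $v_{1}v_{2}-c$, and then reframing ``convexity of $\log\psi$'' as ``log-convexity of $\psi$'' so that additivity of log-convex functions isolates the single nontrivial summand. Once those moves are in place, the semidefiniteness test via trace and determinant is routine. An alternative, should the log-convexity route be judged too terse for the appendix, is to compute $\nabla^{2}f$ directly and verify the equivalent inequality $\psi\,\nabla^{2}\psi\succeq\nabla\psi\,\nabla\psi^{\top}$; this is the same fact written out in full, but is considerably messier.
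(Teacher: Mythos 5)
Your proof is correct, but it takes a genuinely different route from the paper's. The paper stays in the original variables: it sets $X=(A_{1}+z_{1})(A_{2}+z_{2})+r_{1}(A_{2}+z_{2})+r_{2}(A_{1}+z_{1})$, differentiates $f$ twice, and presents the Hessian as the positive scalar $\frac{1}{X^{2}}-\frac{1}{(X+r_{1}r_{2}/\omega)^{2}}$ times a rank-one Gram matrix, which is immediately positive semidefinite. You instead absorb the affine shifts, use the factorization $u_{1}u_{2}+r_{1}u_{2}+r_{2}u_{1}=(u_{1}+r_{1})(u_{2}+r_{2})-r_{1}r_{2}$ to rewrite $f=\log\bigl(\omega+\frac{c}{v_{1}v_{2}-c}\bigr)$, and reduce everything, via closure of log-convexity under addition, to concavity of $\log(v_{1}v_{2}-c)$ --- a trivial $2\times 2$ check. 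Two comparative remarks. First, your route is not only cleaner but in fact more reliable here: the paper's displayed mixed partial $\partial^{2}f/\partial z_{1}\partial z_{2}$ drops the product-rule term coming from the factor $A_{2}+z_{2}+r_{2}$ in $\partial f/\partial z_{1}$ (that factor depends on $z_{2}$), so the true Hessian equals the paper's rank-one expression plus $\bigl(\frac{1}{X+r_{1}r_{2}/\omega}-\frac{1}{X}\bigr)\bigl(\begin{smallmatrix}0&1\\1&0\end{smallmatrix}\bigr)$; the lemma still holds (the determinant of the corrected Hessian can be checked to remain positive), but the paper's one-line ``positive scalar times Gram matrix'' justification does not apply to the correct Hessian, whereas your factorization sidesteps the issue entirely because all cross-dependence between $v_{1}$ and $v_{2}$ is carried by the single product $v_{1}v_{2}$. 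Second, a small point you should make explicit: the set $\{v_{1}v_{2}>c\}$ is not convex (it has two branches), so the convexity assertion should be made on the convex image domain $\{v_{1}>r_{1},\,v_{2}>r_{2}\}$, which lies inside the positive branch where your pointwise Hessian computation for $\log(v_{1}v_{2}-c)$ applies; as phrased, this is only implicit.
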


\begin{proof}
Please refer to Appendix \ref{app:d} .
\end{proof}

\begin{theorem}\label{theorem:lb_af}
Given current trajectory $[x_{n,l},y_{n,l}]$, the following inequality holds:
\begin{equation}\label{eqt:lb_af}
r_{n,l+1}\geq r^{\prime}_{n,l+1}\triangleq r_{n,l}-\mu_{n,l}(\Delta x_{n,l}^{2}+\Delta y_{n,l}^{2})-\delta_{n,l}\Delta x_{n,l}-\eta_{n,l}\Delta y_{n,l}
\end{equation}
for any incremental trajectory $[\Delta x_{n,l},\Delta y_{n,l}]$, where coefficients $\mu_{n,l}$, $\delta_{n,l}$ and $\eta_{n,l}$ are defined in (\ref{eqt:def}). 
\end{theorem}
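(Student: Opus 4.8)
The plan is to recognize the per-slot AF rate as (half of) the function studied in Lemma~\ref{lemma:convexity} and then exploit its convexity to produce a tangent-plane lower bound. Concretely, with the power profile $p_{n}$ and power-splitting ratio $\rho_{n}$ held fixed, I would write the AF rate (\ref{eqt:af}) in slot $n$ as $r_{n,l+1}=\frac{1}{2}f(z_{1},z_{2})$ by treating the two squared distances as the free arguments: set $A_{1}=(d^{sr}_{n,l})^{2}$, $A_{2}=(d^{rd}_{n,l})^{2}$, and let $z_{1}=(d^{sr}_{n})^{2}-A_{1}$, $z_{2}=(d^{rd}_{n})^{2}-A_{2}$ be the increments in squared distance induced by moving from $[x_{n,l},y_{n,l}]$ to $[x_{n,l}+\Delta x_{n,l},\,y_{n,l}+\Delta y_{n,l}]$. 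Matching (\ref{eqt:af}) against (\ref{eqt:function_z}) fixes $\omega=1+P_{s}\gamma$, $r_{1}=P_{s}\gamma_{0}\rho_{n}/(\rho_{n}+a)$ and $r_{2}=p_{n}\gamma_{0}$; one then checks that $r_{1}r_{2}$ reproduces the numerator coefficient of the inner fraction, so the identification is exact and $r_{1},r_{2}>0$, $\omega>0$ as Lemma~\ref{lemma:convexity} requires.

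With this in hand, I would invoke Lemma~\ref{lemma:convexity}: since $f$ is convex in $(z_{1},z_{2})$, it lies above its first-order Taylor expansion about the current point $z_{1}=z_{2}=0$, i.e.
\begin{equation}
f(z_{1},z_{2})\geq f(0,0)+\frac{\partial f}{\partial z_{1}}\Big|_{0}z_{1}+\frac{\partial f}{\partial z_{2}}\Big|_{0}z_{2}.
\end{equation}
Here $f(0,0)=2r_{n,l}$ recovers the current rate, and the two partial derivatives, evaluated at the current trajectory, are explicit rational functions of $A_{1},A_{2},r_{1},r_{2},\omega$; both are negative because increasing either squared distance strictly decreases the received SNR and hence the rate.

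Next I would substitute the geometry. A direct expansion gives $z_{1}=\Delta x_{n,l}^{2}+\Delta y_{n,l}^{2}+2(x_{n,l}-S_{x})\Delta x_{n,l}+2(y_{n,l}-S_{y})\Delta y_{n,l}$ and the analogous expression for $z_{2}$ with $(D_{x},D_{y})$ in place of $(S_{x},S_{y})$. The crucial structural fact is that both increments carry the same isotropic quadratic part $\Delta x_{n,l}^{2}+\Delta y_{n,l}^{2}$, and differ only in their linear parts. Plugging these into the tangent-plane bound and collecting terms then yields exactly a bound of the form $r_{n,l}-\mu_{n,l}(\Delta x_{n,l}^{2}+\Delta y_{n,l}^{2})-\delta_{n,l}\Delta x_{n,l}-\eta_{n,l}\Delta y_{n,l}$, where $\mu_{n,l}=-\frac{1}{2}(\partial_{z_{1}}f+\partial_{z_{2}}f)|_{0}$ aggregates the two negative derivatives into a positive coefficient, and $\delta_{n,l},\eta_{n,l}$ collect the linear contributions weighted by $\partial_{z_{1}}f|_{0}$, $\partial_{z_{2}}f|_{0}$ and the respective source/destination offsets. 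Reading off these coefficients gives the definitions referenced in (\ref{eqt:def}) and completes the inequality.

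The main obstacle I anticipate is the bookkeeping in the second and third steps: computing the two partial derivatives of $f$ in closed form and verifying their sign (so that $\mu_{n,l}>0$, i.e.\ the bound is concave and hence usable in the successive convex optimization), and then organizing the substitution so the quadratic terms combine cleanly into $\mu_{n,l}(\Delta x_{n,l}^{2}+\Delta y_{n,l}^{2})$ while the cross terms sort into $\delta_{n,l}$ and $\eta_{n,l}$. Convexity from Lemma~\ref{lemma:convexity} is what makes the whole argument work, since without it the tangent plane would not be a global lower bound; the only genuinely new content beyond Lemma~\ref{lemma:convexity} is the parameter identification and the observation that the change of variables from $(\Delta x_{n,l},\Delta y_{n,l})$ to $(z_{1},z_{2})$ contributes a common $\Delta x_{n,l}^{2}+\Delta y_{n,l}^{2}$ term to both arguments.
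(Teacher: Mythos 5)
Your proposal is correct and follows essentially the same route as the paper's Appendix E: it too invokes Lemma \ref{lemma:convexity}, takes the first-order Taylor under-estimator of $f$ at $z_{1}=z_{2}=0$ with the identical parameter identification ($\omega=1+P_{s}\gamma$, $r_{1}=P_{s}\gamma_{0}\rho_{n}/(\rho_{n}+a)$, $r_{2}=p_{n}\gamma_{0}$, $A_{1}=(d^{sr}_{n,l})^{2}$, $A_{2}=(d^{rd}_{n,l})^{2}$), substitutes the same expansions $z_{i}=\Delta x_{n,l}^{2}+\Delta y_{n,l}^{2}+2(\cdot)\Delta x_{n,l}+2(\cdot)\Delta y_{n,l}$, and reads off $\mu_{n,l},\delta_{n,l},\eta_{n,l}$ by collecting the common quadratic part and the two linear parts. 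The only divergence is bookkeeping: you carry the factor $\tfrac{1}{2}$ from the rate in (\ref{eqt:af}) explicitly into the coefficients, whereas the coefficients in (\ref{eqt:def}) omit it, so your accounting is, if anything, the more careful one.
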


\begin{proof}
Please refer to Appendix \ref{app:e} .
\end{proof}

Theorem \ref{theorem:lb_af} indicates that given the UAV's current trajectory $[x_{n,l},y_{n,l}]$ and an incremental trajectory $[\Delta x_{n,l},\Delta y_{n,l}]$ after $l$ rounds of iteration, the resulting cooperative rate $r_{n,l+1}$ in the $l+1$th iteration is lower-bounded by $r^{\prime}_{n,l+1}$, which is a concave quadratic functions with respect to the incremental trajectory $[\Delta x_{n,l},\Delta y_{n,l}]$ since $\mu_{n,l}>0$ (which can be inferred from $\phi_{n,l}>0$). Then given the UAV's current trajectory $[x_{n,l},y_{n,l}]$, the optimum of (\ref{eqt:trajectory}) for the AF protocol is lower-bounded by that of the following problem
\begin{equation}\label{eqt:problem_lb_af}
\begin{aligned}
\max\limits_{[\Delta x_{n,l},\Delta y_{n,l}]}\quad &\sum\limits _{n=1}^{N}r^{\prime}_{n,l+1}\\
{\rm s.t.}\quad &(x_{1,l}+\Delta x_{1,l}-x_{s})^{2}+(y_{1,l}+\Delta y_{1,l}-y_{s})^{2}\leq V^{2}\\
&(x_{n+1,l}+\Delta x_{n+1,l}-x_{n,l}-\Delta x_{n,l})^{2}+\\
&(y_{n+1,l}+\Delta y_{n+1,l}-y_{n,l}-\Delta y_{n,l})^{2}\leq V^{2},\quad\forall n=1,...,N-1\\
&(x_{e}-x_{N,l}-\Delta x_{N,l})^{2}+(y_{e}-y_{N,l}-\Delta y_{N,l})^{2}\leq V^{2}
\end{aligned}
\end{equation}

Obviously, (\ref{eqt:problem_lb_af}) is a convex problem (quadratically constrained quadratic programming) with respect to the incremental trajectory $[\Delta x_{n,l},\Delta y_{n,l}]$ due to the concave objective function and convex-set constraints. Therefore, (\ref{eqt:problem_lb_af}) can be readily solved via existing optimization techniques (e.g., the interior point method). Then following the successive convex optimization technique in \cite{7572068}, for the AF protocol, (\ref{eqt:trajectory}) can be approximately solved by iteratively solving (\ref{eqt:problem_lb_af}) for the optimal incremental trajectory and updating the UAV's trajectory.

\subsection{Decode-and-Forward}
Similarly, for the DF protocol, one can still leverage the successive convex optimization technique in \cite{7572068}. We first define $r^{1}_{n,l}$ and $r^{2}_{n,l}$ as the two terms in (\ref{eqt:df}) to facilitate derivation. According to Lemma 2 in \cite{7572068}, given the UAV's trajectory $[x_{n,l},y_{n,l}]$ and an incremental trajectory $[\Delta x_{n,l},\Delta y_{n,l}]$ after $l$ rounds of iteration, $r^{1}_{n,l+1}$ and $r^{2}_{n,l+1}$ are lower-bounded by
\begin{equation}\label{eqt:lb_df}
r^{i}_{n,l+1}\geq r^{i\prime}_{n,l+1}\triangleq r^{i}_{n,l}-\mu^{i}_{n,l}(\Delta x^{2}_{n,l}+\Delta y^{2}_{n,l})-\delta^{i}_{n,l}\Delta x_{n,l}-\eta^{i}_{n,l}\Delta y_{n,l}
\end{equation}
for $i\in\{1,2\}$, where $\mu^{i}_{n,l}$, $\delta^{i}_{n,l}$ and $\eta^{i}_{n,l}$ are defined as
\begin{equation}
\left.\{
\begin{aligned}
&\mu^{1}_{n,l}=\dfrac{\dfrac{P_{s}\gamma_{0}\rho_{n}}{\rho_{n}+a}}{(d^{sr}_{n,l})^{2}[\dfrac{P_{s}\gamma_{0}\rho_{n}}{\rho_{n}+a}+(d^{sr}_{n,l})^{2}]}\\
&\mu^{2}_{n,l}=\dfrac{p_{n}\gamma_{0}}{(d^{rd}_{n,l})^{2}[p_{n}\gamma_{0}+(1+P_{s}\gamma)(d^{rd}_{n,l})^{2}]}\\
&\delta^{1}_{n,l}=2\mu^{1}_{n,l}(x_{n,l}-S_{x}),\quad \delta^{2}_{n,l}=2\mu^{2}_{n,l}(x_{n,l}-D_{x})\\
&\eta^{1}_{n,l}=2\mu^{1}_{n,l}(y_{n,l}-S_{y}),\quad \eta^{2}_{n,l}=2\mu^{2}_{n,l}(y_{n,l}-D_{y})
\end{aligned}
\right.
\end{equation}
Then given the UAV's current trajectory $[x_{n,l},y_{n,l}]$, the optimum of (\ref{eqt:trajectory}) for the DF protocol is lower-bounded by that of the following problem
\begin{equation}\label{eqt:problem_lb_df}
\begin{aligned}
\max\limits_{[\Delta x_{n,l},\Delta y_{n,l}]}\quad &\sum\limits _{n=1}^{N}\min\{r^{1\prime}_{n,l+1},r^{2\prime}_{n,l+1}\}\\
&(x_{n+1,l}+\Delta x_{n+1,l}-x_{n,l}-\Delta x_{n,l})^{2}+\\
&(y_{n+1,l}+\Delta y_{n+1,l}-y_{n,l}-\Delta y_{n,l})^{2}\leq V^{2},\quad\forall n=1,...,N-1\\
&(x_{e}-x_{N,l}-\Delta x_{N,l})^{2}+(y_{e}-y_{N,l}-\Delta y_{N,l})^{2}\leq V^{2}
\end{aligned}
\end{equation}

Similar with (\ref{eqt:problem_lb_af}), (\ref{eqt:problem_lb_df}) is a convex problem since both $r^{1\prime}_{n,l+1}$ and $r^{2\prime}_{n,l+1}$ are concave with respect to the incremental trajectory $[\Delta x_{n,l},\Delta y_{n,l}]$ according to (\ref{eqt:lb_df}). Therefore, for the DF protocol, (\ref{eqt:trajectory}) can be approximately solved by iteratively solving (\ref{eqt:problem_lb_df}) for the optimal incremental trajectory and updating the UAV's trajectory as well.

Thus, the algorithm for optimizing the UAV's trajectory with fixed power profile and power-splitting ratio profile is summarized in \ref{alg:trajectory}.

\begin{algorithm}[htb] 
\caption{Trajectory Optimization with Fixed Profile}\label{alg:trajectory}
\begin{algorithmic}[1]
\STATE Initialize the the UAV's trajectory $[x_{n,l},y_{n,l}]$ for $n=1,...,N$ that satisfies the constraints in (\ref{eqt:trajectory}) and set $l=0$;
\STATE \textbf{Repeat}
\STATE \quad Solve (\ref{eqt:problem_lb_af}) for the AF protocol or (\ref{eqt:problem_lb_df}) for the DF protocol via the interior point method for the optimal incremental trajectory $[\Delta^{*}x_{n,l},\Delta^{*}y_{n,l}]$;
\STATE \quad Update the UAV's trajectory by $x_{n,l+1}=x_{n,l}+\Delta^{*}x_{n,l}$ and $y_{n,l+1}=y_{n,l}+\Delta^{*}y_{n,l}$ for $n=1,...,N$;
\STATE \quad Set $l=l+1$;
\STATE \textbf{Until} Improvement of the lower bound or variation of the optimal incremental trajectory  $[\Delta^{*}x_{n,l},\Delta^{*}y_{n,l}]$ converges under a predefined tolerance;
\STATE Output $[x_{n,l},y_{n,l}]$ as the UAV's optimal trajectory.
\end{algorithmic}
\end{algorithm}

\section{Alternate Optimization}
Based on the solutions to the two subproblems of (\ref{eqt:problem}) as discussed in Section \ref{sec:3} and \ref{sec:4}, an iterative algorithm based on alternate optimization is proposed to jointly optimize the UAV's power profile, power-splitting ratio profile and trajectory, as summarized in Algorithm \ref{alg:alternate_optimize}.

Apparently, Algorithm \ref{alg:alternate_optimize} involves only convex problem solving in each iteration according to Algorithm \ref{alg:profile} and \ref{alg:trajectory}. Therefore, the overall complexity of Algorithm \ref{alg:alternate_optimize} is polynomial for the worst case because number of iterations is only related to stop criterion instead of problem scale $N$. Moreover, it is also notable that since Algorithm \ref{alg:alternate_optimize} is based on alternate optimization, global optimum cannot be guaranteed in theory. Hence, the optimal cooperative throughput achieved by Algorithm \ref{alg:alternate_optimize} might be affected by the UAV's initial trajectory.

\begin{algorithm}[htb] 
\caption{Alternate Optimization}\label{alg:alternate_optimize}
\begin{algorithmic}[1]
\STATE Initialize the UAV's trajectory;
\STATE \textbf{Repeat}
\STATE \quad Optimize the UAV's power profile and power-splitting ratio profile with fixed trajectory following Algorithm \ref{alg:profile};
\STATE \quad Optimize the UAV's trajectory with fixed power profile and power-splitting ratio profile following Algorithm \ref{alg:trajectory};
\STATE \textbf{Until} Improvement of the end-to-end cooperative throughput converges under a predefined tolerance;
\STATE Output the UAV's power profile, power-splitting ratio and trajectory as the optimal solution to (\ref{eqt:problem}).
\end{algorithmic}
\end{algorithm}

\section{Numerical Results}
In this section, numerical results are presented to validate the proposed optimal design for the UAV-assisted cooperative communication system.  For system layout, source and destination are located at $[0,0]$ and $[2,0]$, respectively, and the UAV flies from $[0,1]$ to $[2,-1]$ at altitude $H=0.3$ with maximum travel distance per time slot $V=0.2$. For other system parameters, we have $\gamma_{0}=1$, $\gamma=0.01$, $P_{s}=1$, $\delta =1$ and $a=2$. Besides, to balance accuracy and computational complexity, we focus on a time horizon of $50$ time slots. One of our previous works is involved for comparison as ``greedy'' strategy since in each time slot, the UAV uses up all the energy harvested from signal sent from the source for relaying\cite{7417077}. In other words, with the greedy strategy, the energy-causality constraint (\ref{eqt:energy_causality}) is transformed into
\begin{equation}\label{eqt:greedy}
p_{n}=(1+\dfrac{P_{s}\gamma_{0}}{(d^{sr}_{n})^{2}})\delta(1-\rho_{n}),\quad \forall n=1,...,N
\end{equation}
Therefore, variables $p_{n}$ are eliminated and the optimal power-splitting ratio profile given the UAV's trajectory can be derived by following (17) and (23) in \cite{7417077} for AF and DF protocols.

\subsection{Profile Optimization Given Trajectory}

\begin{figure}
\centering
 \includegraphics[width=4.3in]{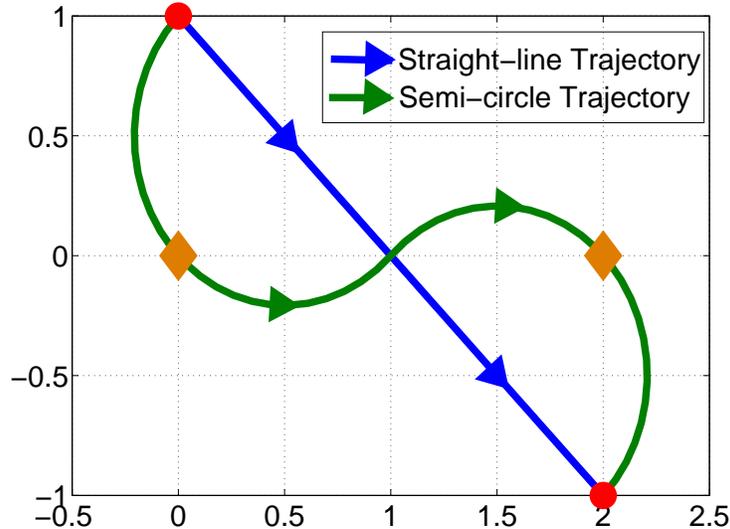}
\caption{Two specific trajectories for the UAV's optimal power profile and power-splitting profile. The yellow diamonds are the source and destination and the red circles are the UAV's start and end locations.}
\label{fig:trajectory}
\vspace*{-0.15in}
\end{figure}

We first focus on the UAV's optimal power profile and power-splitting ratio profile with two specific trajectories from source to destination: straight-line trajectory and semi-circle trajectory, as shown in Fig. \ref{fig:trajectory}. The UAV's optimal power profile with the two trajectories is shown in Fig. \ref{fig:PowerProfileStraight} and \ref{fig:PowerProfileCircle}, respectively. It can be seen that the greedy strategy always exhausts all the energy harvested from the source in each time slot, while the optimal power profile is more long-sighted and makes use of the harvested energy more wisely. For instance, with the AF protocol, the UAV tends to save energy in the first few time slots and consumes more in the future. It is notable that with the longer semi-circle trajectory, the UAV is able to harvest more energy from the source.

The UAV's optimal power-splitting ratio profile with the two trajectories are presented in Fig. \ref{fig:PSRatioProfileStraight} and \ref{fig:PSRatioProfileCircle}, respectively. It can be found that the optimal power-splitting ratio profile is significantly different from that with the greedy strategy. Similarly, it demonstrates that the UAV tends to harvest energy without relaying at the beginning. Besides, by comparing the results with the two trajectories, we find that the power-splitting ratio with the semi-circle trajectory is in general higher than that with the straight-line trajectory. This is probably because flying along the semi-circle trajectory leads to better channel condition due to higher chance of getting closer to either source or destination, as shown in Fig. \ref{fig:trajectory}.	

It is also noticeable in Fig. \ref{fig:PowerProfileStraight}-\ref{fig:PSRatioProfileCircle} that the greedy strategy could be identical with the optimal solution. Take the AF protocol with straight-line trajectory for example (as shown in Fig. \ref{fig:PowerProfileStraight} and \ref{fig:PSRatioProfileStraight}), the optimal solution converges to the greedy strategy after the $34$th time slot, which means the optimal choice is to myopically use up all the harvested energy without saving in each timeslot. In this sense, the greedy strategy is viable for a certain setting of system parameters. Practically, one can compare (\ref{eqt:greedy}) with (\ref{eqt:p_af}) for the AF protocol or with (\ref{eqt:p_df}) for the DF protocol to see whether the greedy strategy is optimal.

\def\2Win{4.3in}
\begin{figure*}
\centering
\begin{minipage}[t]{\2Win}
\includegraphics[width=\2Win]{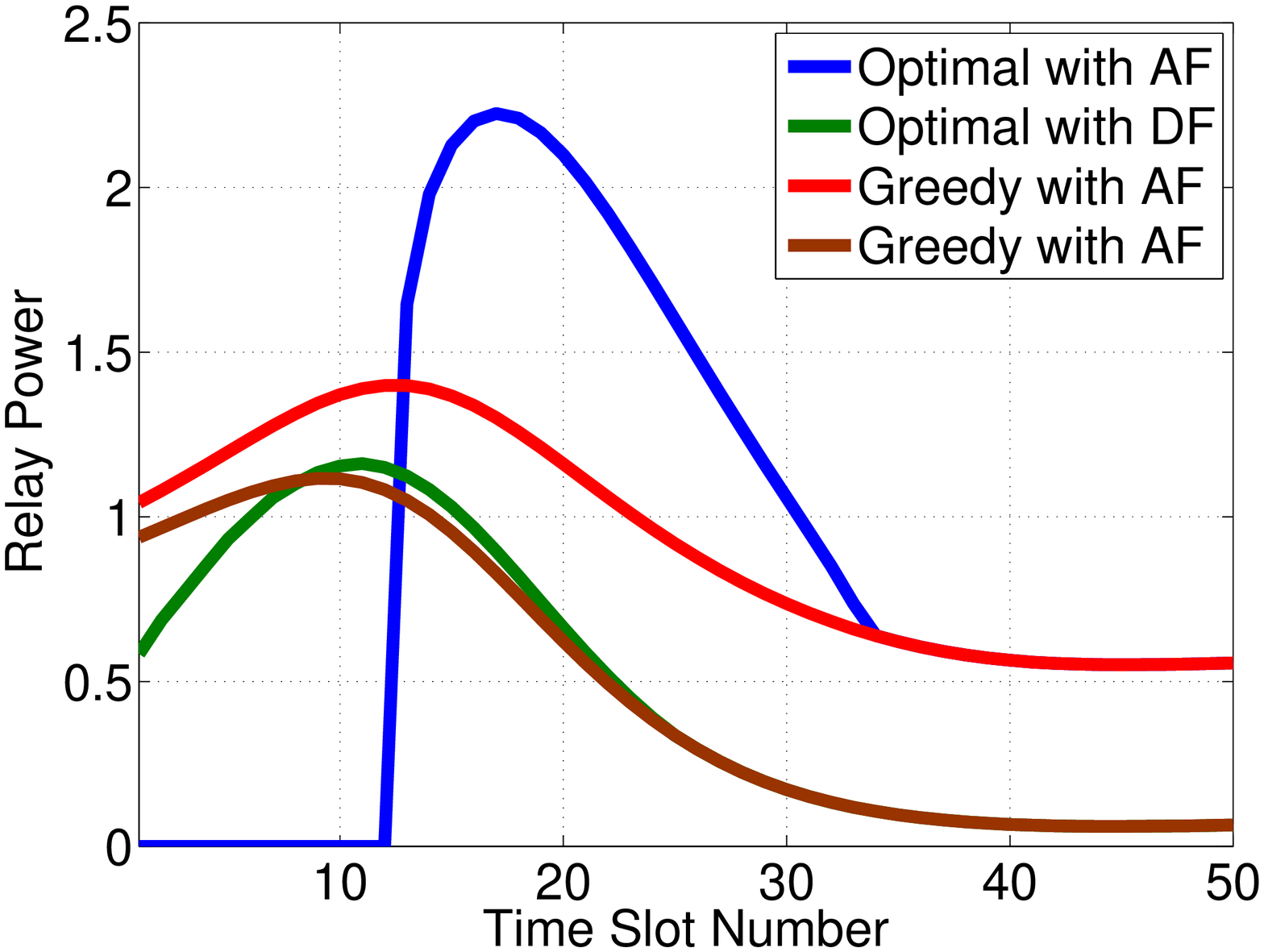}
\caption{The UAV's optimal power profile with straight-line trajectory.}
\label{fig:PowerProfileStraight}
\end{minipage}
\begin{minipage}[t]{\2Win}
\includegraphics[width=\2Win]{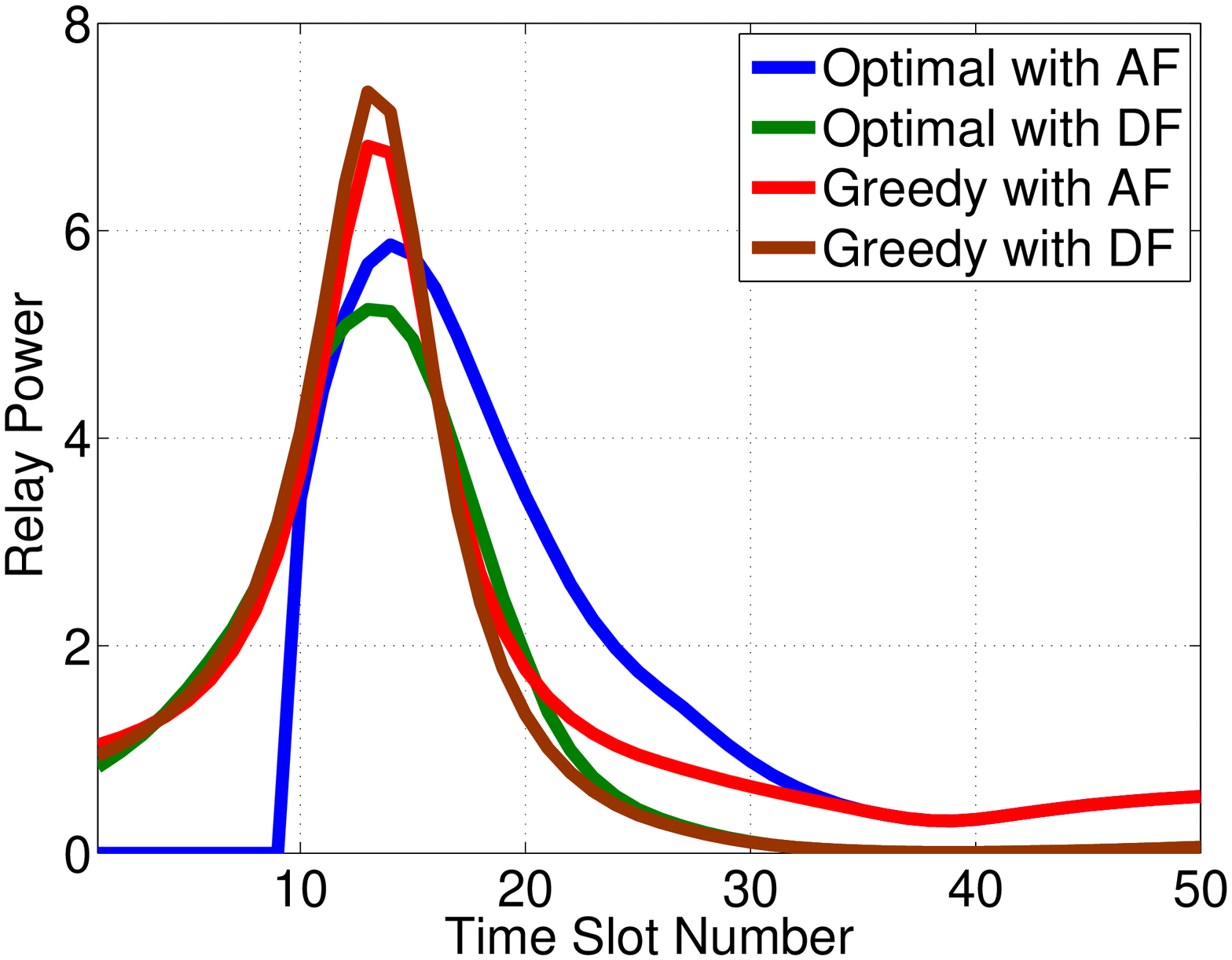}
\caption{The UAV's optimal power profile with semi-circle trajectory.}
\label{fig:PowerProfileCircle}
\end{minipage}
\end{figure*}

\def\2Win{4.3in}
\begin{figure*}
\centering
\begin{minipage}[t]{\2Win}
\includegraphics[width=\2Win]{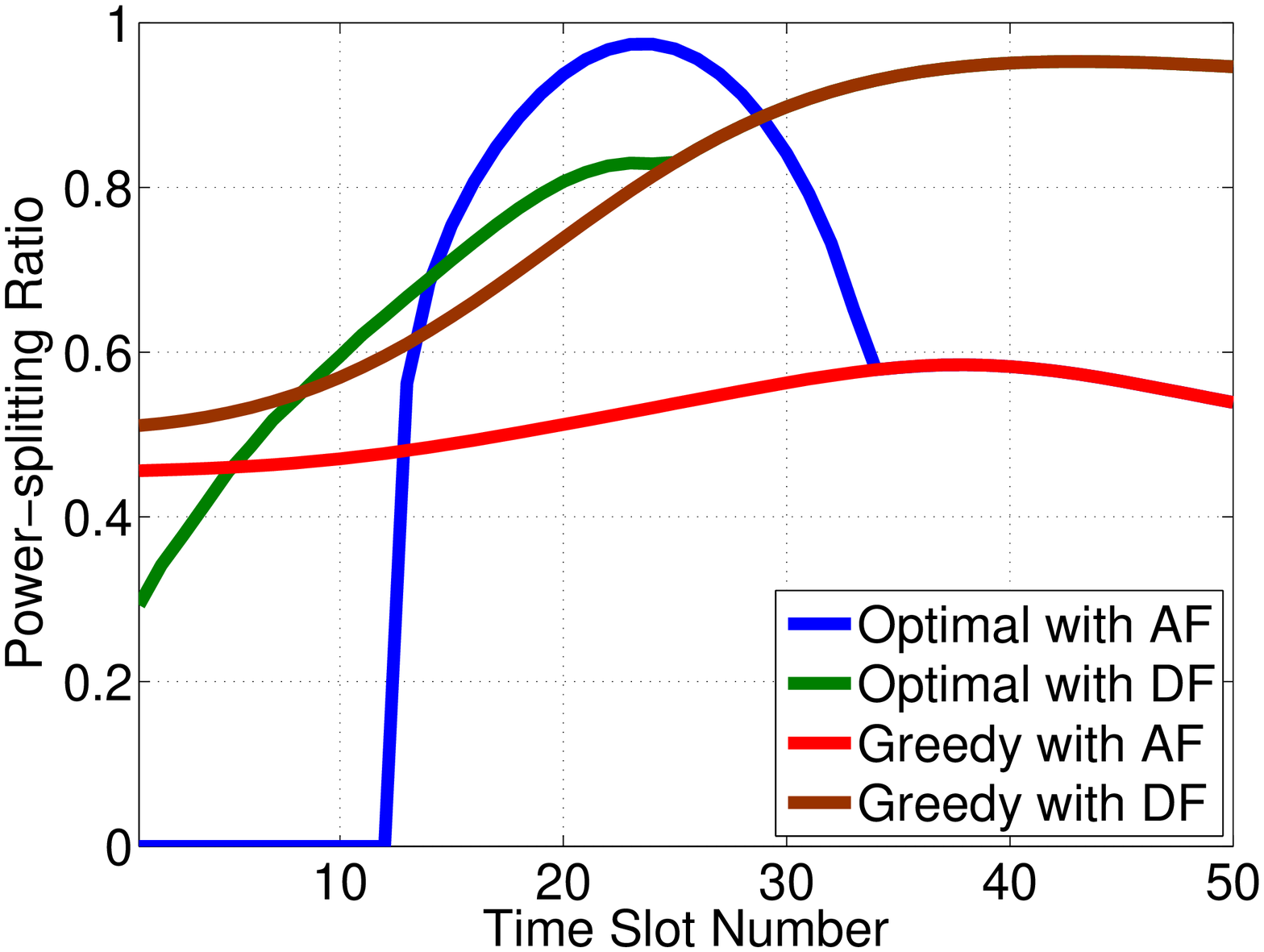}
\caption{The UAV's optimal power-splitting ratio profile with straight-line trajectory.}
\label{fig:PSRatioProfileStraight}
\end{minipage}
\begin{minipage}[t]{\2Win}
\includegraphics[width=\2Win]{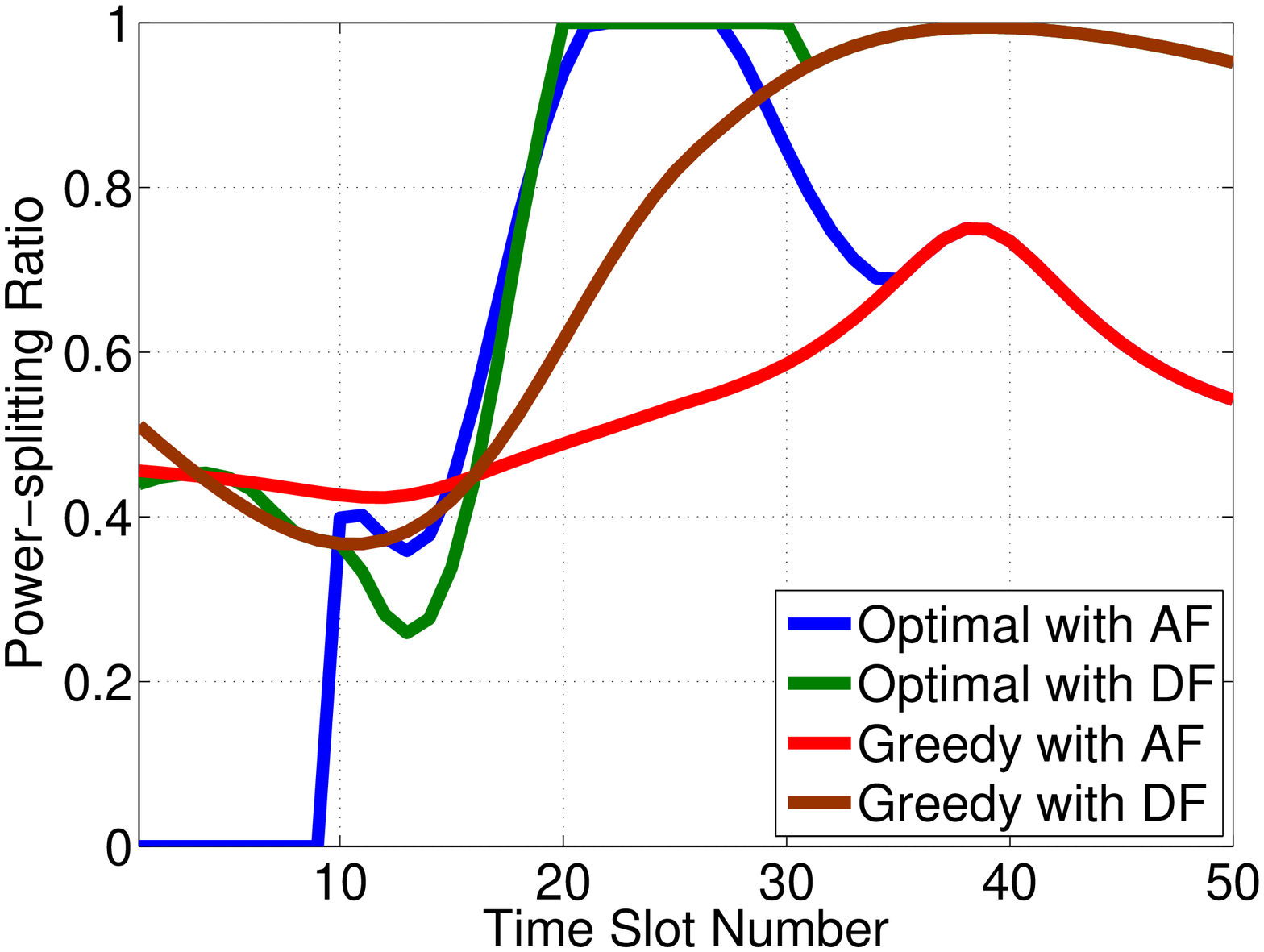}
\caption{The UAV's optimal power-splitting ratio profile with semi-circle trajectory.}
\label{fig:PSRatioProfileCircle}
\end{minipage}
\end{figure*}

\subsection{Trajectory Optimization Given Profile}
In this subsection, we focus on the UAV's optimal trajectory with fixed power profile and power-splitting ratio profile. For demonstration purpose, the greedy strategy with constant power profile and power-splitting ratio profile through the entire $N$ time slots is considered, i.e., power-splitting ratio is fixed (at $0.5$ throughout the experiments in this subsection) and power profile is accordingly determined by (\ref{eqt:greedy}). Besides, we take the straight-line trajectory in Fig. \ref{fig:trajectory} as the initial trajectory for Algorithm \ref{alg:trajectory}.

\def\2Win{4.3in}
\begin{figure*}
\centering
\begin{minipage}[t]{\2Win}
\includegraphics[width=\2Win]{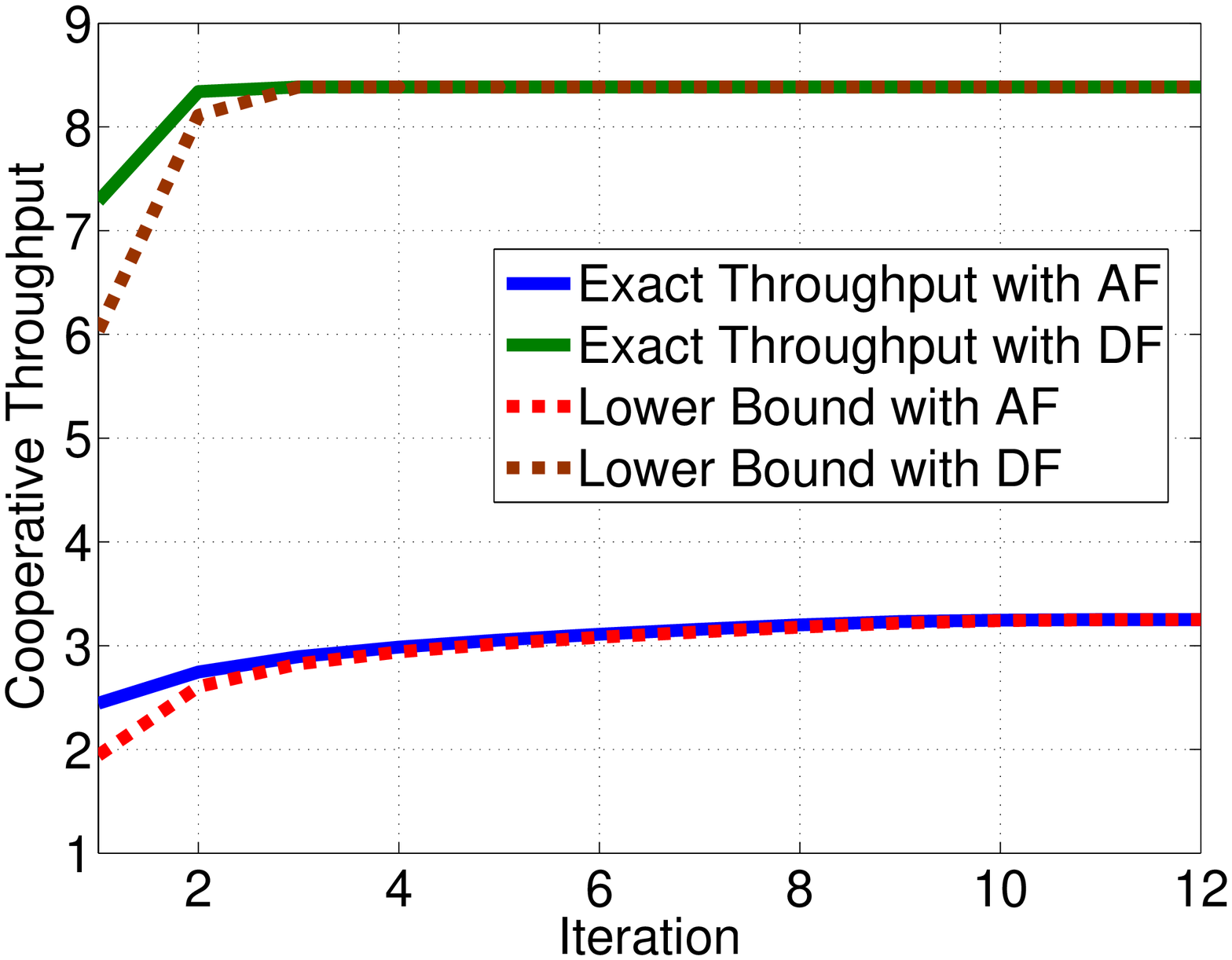}
\caption{Cooperative throughput iteration by Algorithm \ref{alg:trajectory}.}
\label{fig:ThroughputIteration}
\end{minipage}
\begin{minipage}[t]{\2Win}
\includegraphics[width=\2Win]{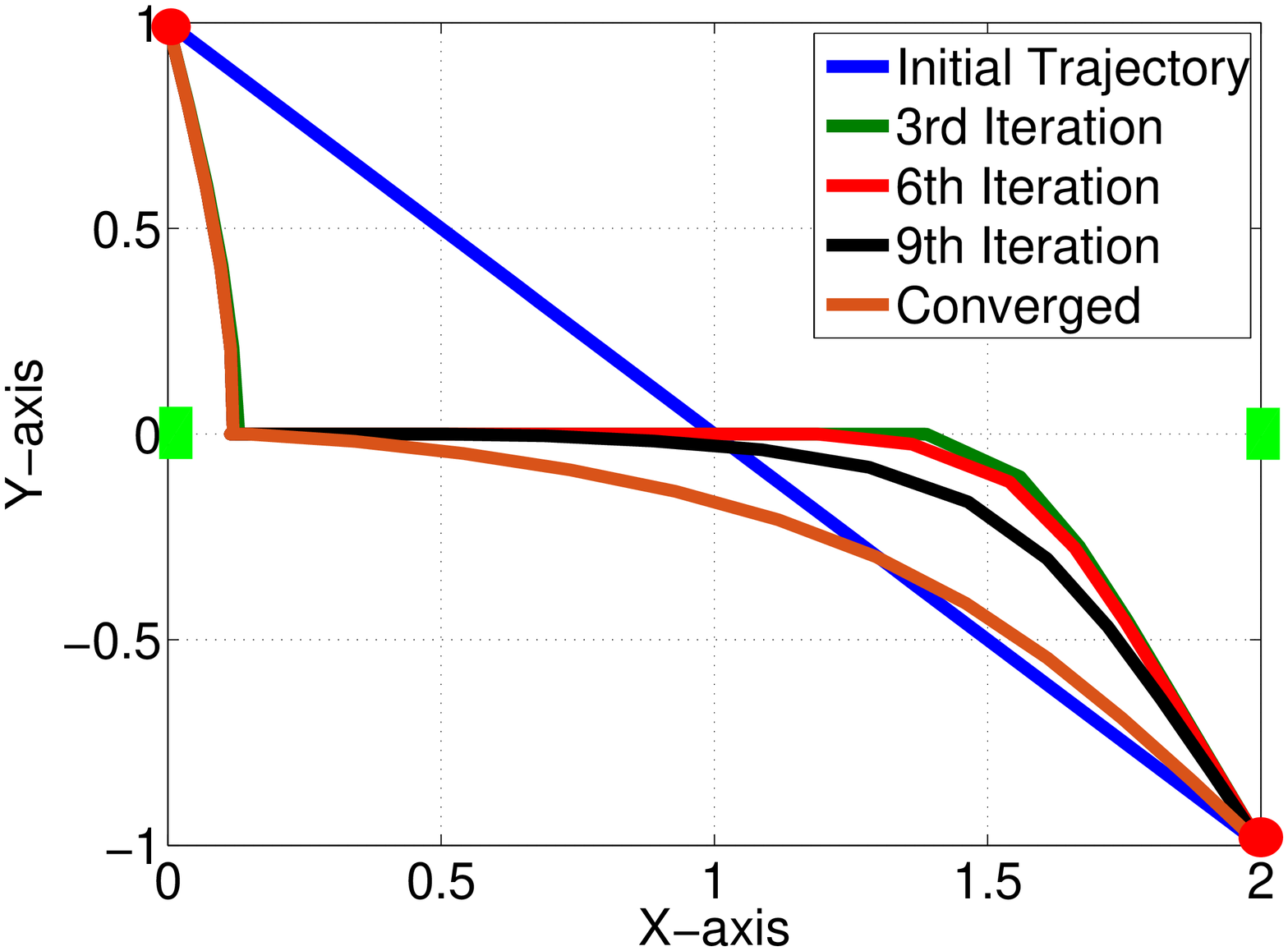}
\caption{The UAV's iteratively updated trajectory (projected) following Algorithm \ref{alg:trajectory}. The green rectangle are the source and destination and the red circles are the UAV's start and end locations.}
\label{fig:TrajectoryIteration}
\end{minipage}
\end{figure*}

We first plot the iterative evolution of cooperative throughput following Algorithm \ref{alg:trajectory}, as shown in Fig. \ref{fig:ThroughputIteration}, where the exact cooperative throughput and its lower bound defined by (\ref{eqt:lb_af}) and (\ref{eqt:lb_df}) are presented for both AF and DF protocols. It can be shown that given the UAV's power profile and power-splitting ratio profile, the cooperative throughput can be improved significantly via the successive convex optimization. Besides, Algorithm \ref{alg:trajectory} is able to converge just after a few rounds of iteration\footnote{The convergence rate of Algorithm \ref{alg:trajectory} is significantly affected by stop criterion, e.g., the minimum improvement for the lower bound in each iteration, which is always $0.1\%$ in our experiments.} and with current parameter setting, it converges faster with the DF protocol, which is resulted from the piecewise minimum in (\ref{eqt:df}).

Fig. \ref{fig:TrajectoryIteration} shows the UAV's iteratively updated trajectory following Algorithm \ref{alg:trajectory}, which are projected onto the ground plane for ease of demonstration. We can see that instead of following the initial straight-line trajectory, the UAV tends to fly following a trajectory that is close either to the source or the destination to gain better channel condition. Noticeably, it can be inferred from the evolution of the iteratively updated trajectory, that the UAV tends to find a shorter path that is closer to the source. This also indicates that channel condition of the UAV-to-destination link is less important than that of the source-to-UAV link.

\def\2Win{4.3in}
\begin{figure*}
\centering
\begin{minipage}[t]{\2Win}
\includegraphics[width=\2Win]{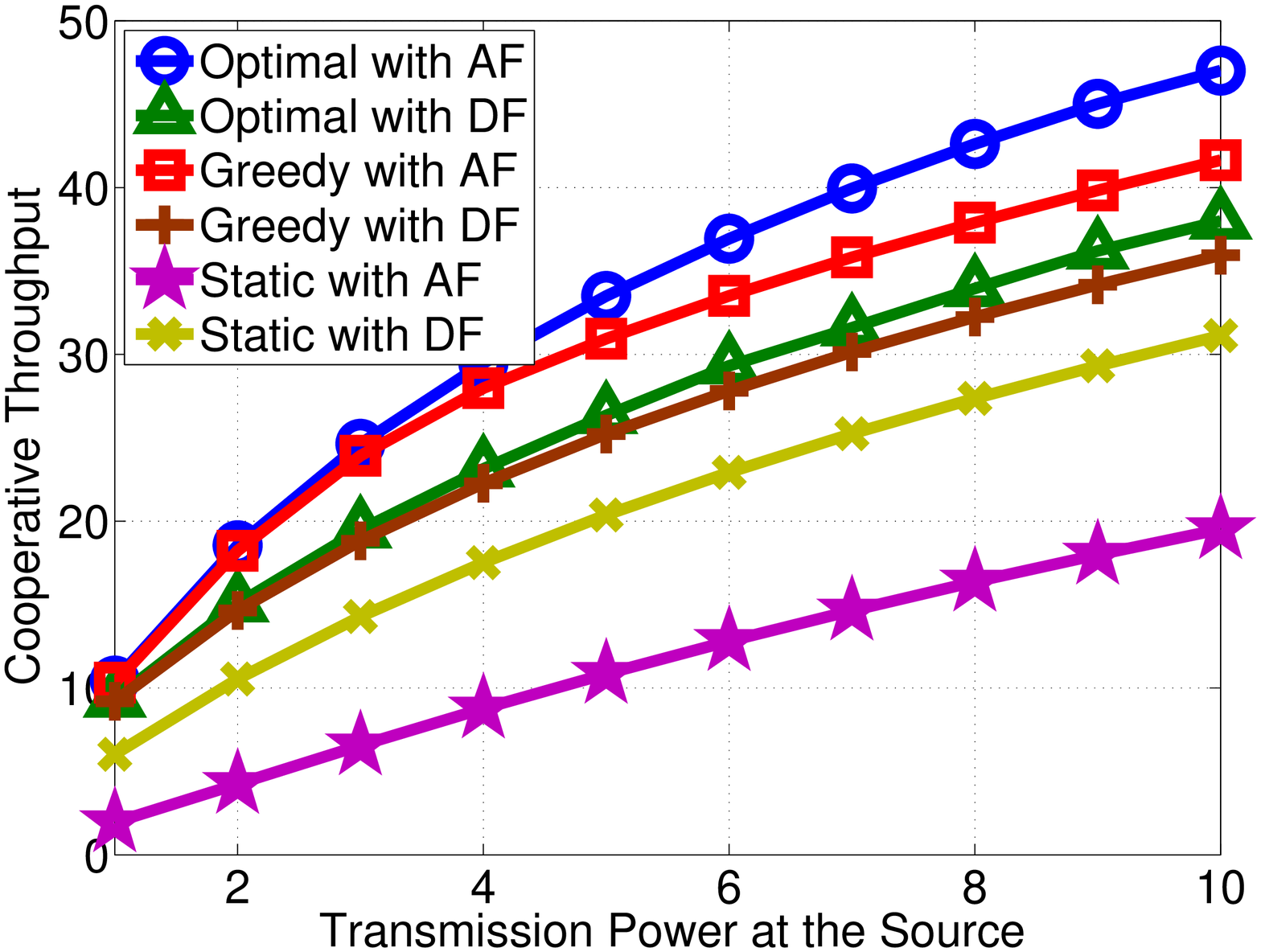}
\caption{Cooperative throughput comparison versus tranmission power at the source.}
\label{fig:RateComp}
\end{minipage}
\begin{minipage}[t]{\2Win}
\includegraphics[width=\2Win]{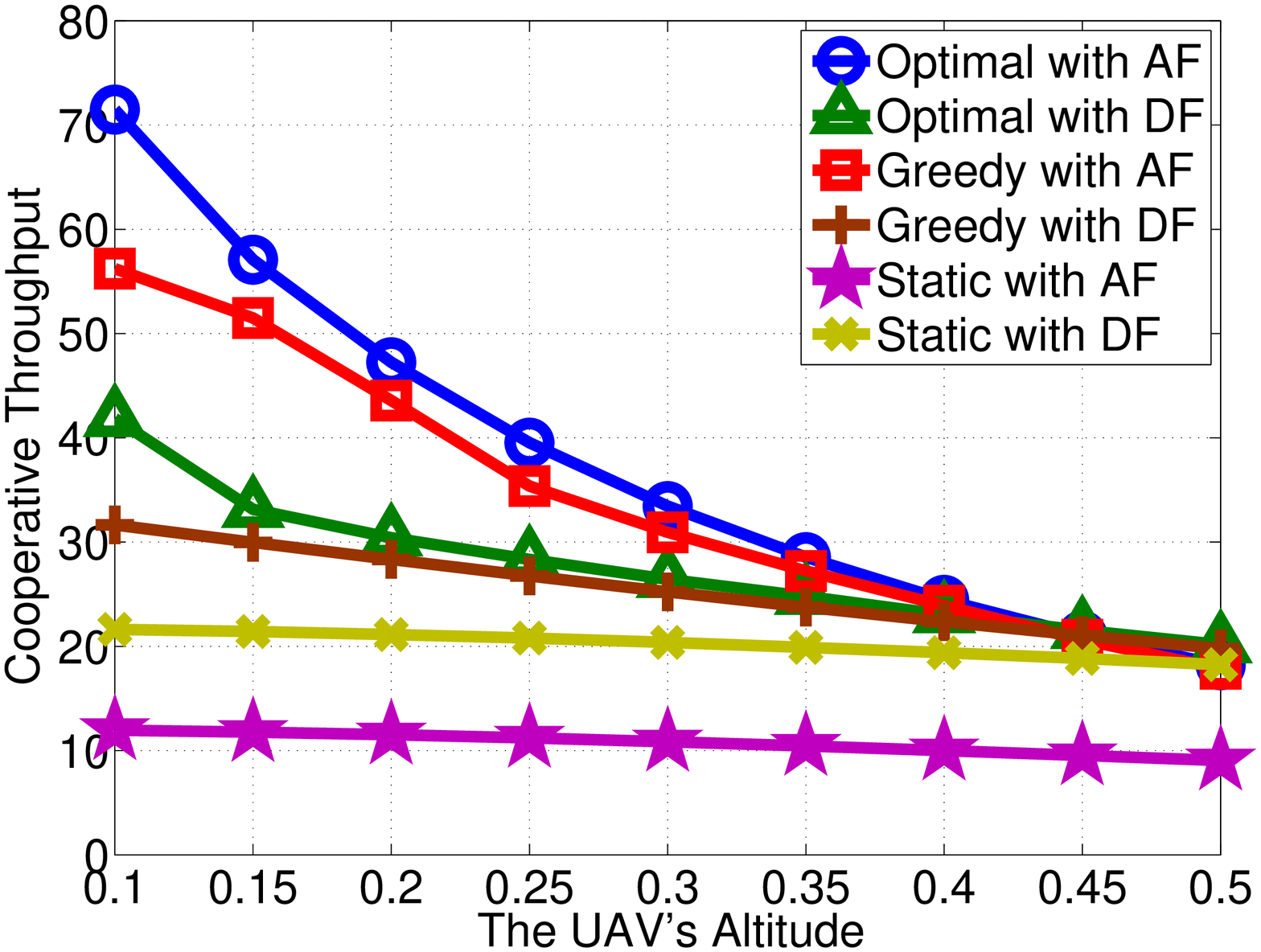}
\caption{Cooperative throughput comparison versus the UAV's altitude.}
\label{fig:RateCompHeight}
\end{minipage}
\end{figure*}

\subsection{Joint Profile and Trajectory Optimization}
We show the end-to-end cooperative throughput versus the transmission power at the source for both AF and DF protocols in Fig. \ref{fig:RateComp}, where a static strategy, i.e., the UAV is hovering in the air constantly at location $[0,1]$ (midpoint between the source and destination) for the entire $N$ time slots, is also included for comparison. Here the UAV's trajectory is initialized as a semi-circle for Algorithm \ref{alg:alternate_optimize}. It can be observed that for both AF and DF protocols, the static strategy is outperformed by both the proposed optimal solution and the greedy strategy and the performance gap is amplified as the transmission power at the source increases. This illustrates the non-negligible advantage of the UAV's high mobility over a conventional static relay. We can also find that the proposed optimal solution always outperforms the greedy strategy and the performance gap is amplified with higher transmission power at the source as well. This is because with the proposed optimal solution, energy carried by signal from the source is used more appropriately via the optimal power profile and power-splitting ratio profile. In addition, it is notable that the DF protocol outperforms the AF protocol for the static strategy while the opposite happens for the proposed optimal solution and the greedy strategy, which indicates that the AF protocol can have higher performance gain than the DF protocol with either of the two.

Furthermore, the end-to-end cooperative throughput versus the UAV's altitude is shown in Fig. \ref{fig:RateCompHeight}. Similar with Fig. \ref{fig:RateComp}, both the proposed optimal solution and the greedy strategy outperform the static strategy. However, the performance gain tends to shrink as the UAV flies at a higher altitude, which is resulted from worse channel condition of the cooperative link. In addition, it can be also observed that with excessively weak channel condition of the cooperative link, the greedy strategy almost presents equal performance compared to the proposed optimal solution. Therefore, when the cooperative link is not sufficiently favorable, the greedy strategy is more practicable to balance performance and computational complexity since it involves only a closed-form solution to the UAV's power profile and power-splitting ratio profile.

\section{Conclusion}
In this paper, we study the end-to-end cooperative throughput maximization problem for a typical cooperative communication system, where a UAV serves as an aerial mobile relay with both AF and DF protocols and its transmission capability is powered by radio signal transmitted from the source via the power-splitting scheme. By proving convexity of the problem, the UAV's power profile and power-splitting ratio profile given trajectory is optimized via dual decomposition. With fixed  power profile and power-splitting ratio profile, the UAV's trajectory is optimized via successive convex optimization, which iteratively optimizes the incremental trajectory to maximize a lower bound. Then the end-to-end cooperative throughput is maximized by alternately optimizing the UAV's profile and trajectory. Numerical results show that the proposed optimal solution makes a better choice for the UAV's profile and trajectory and outperforms the greedy and static strategies, especially for the AF protocol.


%

\appendices
\section{Proof of Theorem \ref{theorem:concavity}}\label{app:a}
For AF protocol, it is easy to find that $x_{n}(\rho_{n})=P_{s}\gamma_{0}\rho_{n}/[(\rho_{n} +a)(d^{sr}_{n})^{2}]$ is concave over $0\leq\rho_{n}\leq 1$ and $y_{n}(p_{n})=\gamma_{0}p_{n}/(d^{rd}_{n})^{2}$ is linear (affine) over $p_{n}\geq 0$. Moreover, $\log(1+P_{s}\gamma+x_{n}y_{n}/(1+x_{n}+y_{n}))$ is monotonically increasing with respect to both $x_{n}\geq 0$ and $y_{n}\geq 0$. Therefore, according to the concavity preserving theorems for composition and non-negative sum\cite{Boyd:2004:CO:993483}, to prove concavity of the objective function in (\ref{eqt:problem_profile}) is equivalent to prove concavity of $f(x_{n},y_{n})=x_{n}y_{n}/(1+x_{n}+y_{n})$. The concavity of $f(x_{n},y_{n})$ can be judged by its Hessian matrix, which is given by
\begin{equation}
\bigtriangledown ^{2}f(x_{n},y_{n})=
\left[
\begin{array}{cc}
\dfrac{-2y_{n}(y_{n}+1)}{(1+x_{n}+y_{n})^{3}} & \dfrac{2x_{n}y_{n}}{(1+x_{n}+y_{n})^{3}}\\ 
\dfrac{2x_{n}y_{n}}{(1+x_{n}+y_{n})^{3}} & \dfrac{-2x_{n}(x_{n}+1)}{(1+x_{n}+y_{n})^{3}}
\end{array}
\right]
\end{equation}
Obviously, we have $-2y_{n}(y_{n}+1){(1+x_{n}+y_{n})^{2}}<0$ and $|\bigtriangledown ^{2}f(x_{n},y_{n})|\geq 0$ for $x_{n}, y_{n}\geq 0$, which indicate that $\bigtriangledown ^{2}f(x_{n},y_{n})$ is non-positive definite. Therefore, the objective function in (\ref{eqt:problem_profile}) is concave over $p_{n}$ and $\rho_{n}$ for AF protocol. 

For DF protocol, it is evident that the two terms in (\ref{eqt:df}) are concave with respect to $\rho_{n}$ and $p_{n}$, respectively. Since pointwise minimum preserves concavity\cite{Boyd:2004:CO:993483}, the objective function in (\ref{eqt:problem_profile}) is concave over $p_{n}$ and $\rho_{n}$ for DF protocol.

Finally, it can be concluded that the objective function in (\ref{eqt:problem_profile}) is concave for both AF and DF protocols.

\section{Proof of (\ref{eqt:p_af}) and (\ref{eqt:rho_af})}\label{app:b}
Given $p^{n}$, we first define intermediate variables $c_{n,i}$ for $i=1,...,5$ as
\begin{equation}
\left\{
\begin{aligned}
&c_{n,1}=1+P_{s}\gamma+\dfrac{P_{s}p_{n}\gamma_{0}^{2}}{c_{n,2}(d^{sr}_{n}d^{rd}_{n})^{2}}\\
&c_{n,2}=1+\dfrac{P_{s}\gamma_{0}}{(d^{sr}_{n})^{2}}+\dfrac{p_{n}\gamma_{0}}{(d^{rd}_{n})^{2}}\\
&c_{n,3}=(1+\dfrac{p_{n}\gamma_{0}}{(d^{rd}_{n})^{2}})a\\
&c_{n,4}=(c_{n,1}-1-P_{s}\gamma)c_{n,3}\\
&c_{n,5}=\alpha_{n}(1+\dfrac{P_{s}\gamma_{0}}{(d^{sr}_{n})^{2}})\delta
\end{aligned}
\right.
\end{equation}
to facilitate the following derivation and it is easy to find that $c_{n,i}>0$ for $i=1,...,5$. The stationary point $\rho^{c}_{n}$ can be derived by solving $\partial f/\partial\rho_{n}=0$ as follows:
\begin{equation}\label{eqt:der_rho}
\begin{aligned}
&\dfrac{\partial f}{\partial\rho_{n}}=0\\
\Rightarrow &\dfrac{\partial r_{n}}{\partial\rho_{n}}-c_{n,5}=0\\
\Rightarrow &\dfrac{c_{n,2}c_{n,4}}{c_{n,1}(c_{n,2}\rho_{n}+c_{n,3})^{2}-c_{n,4}(c_{n,2}\rho_{n}+c_{n,3})}=2c_{n,5}\\
\Rightarrow &c_{n,1}\rho^{\prime 2}_{n}-c_{n,4}\rho^{\prime}_{n}-\dfrac{c_{n,2}c_{n,4}}{2c_{n,5}}=0
\end{aligned}
\end{equation}
where $\rho^{\prime}_{n}=c_{n,2}\rho_{n}+c_{n,3}$. The last equation in (\ref{eqt:der_rho}) indicates that there is one pair of positive and negative roots for possible stationary points. Obviously, we have $c_{n,3}\leq\rho^{\prime}_{n}\leq c_{n,2}+c_{n,3}$ for a feasible stationary point since $0\leq\rho_{n}\leq 1$. Therefore, the negative root is never a feasible stationary point and only the positive root is possibly feasible, which is given by
\begin{equation}
\rho^{c}_{n}=\dfrac{c_{n,4}+\sqrt{c_{n,4}^{2}+\dfrac{2c_{n,1}c_{n,2}c_{n,4}}{c_{n,5}}}}{2c_{n,1}c_{n,2}}-\dfrac{c_{n,3}}{c_{n,2}}
\end{equation}
Apparently, if $0\leq\rho^{c}_{n}\leq 1$, we have $\rho^{*}_{n}=\rho^{c}_{n}$ for the optimal power-splitting ratio profile given power profile. Otherwise, we have $\rho^{*}_{n}=\mathop{\argmax}_{\rho_{n}\in\{0,1\}}f(p_{n},\rho_{n})$, i.e., the better among $\rho_{n}=0$ and $\rho_{n}=1$ is selected as the optimal power-splitting ratio profile.

Given $\rho_{n}$, the optimal power profile can be derived following a similar way. By 
\begin{equation}
\left\{
\begin{aligned}
&b_{n,1}=1+P_{s}\gamma+\dfrac{P_{s}\gamma_{0}\rho_{n}}{(d^{sr}_{n})^{2}(\rho_{n}+a)}\\
&b_{n,2}=\dfrac{\gamma_{0}}{(d^{rd}_{n})^{2}}\\
&b_{n,3}=1+\dfrac{P_{s}\gamma_{0}\rho_{n}}{(d^{sr}_{n})^{2}(\rho_{n}+a)}\\
&b_{n,4}=\dfrac{P_{s}\gamma_{0}\rho_{n}b_{n,3}}{(d^{sr}_{n})^{2}(\rho_{n}+a)}\\
&b_{n,5}=\alpha_{n}
\end{aligned}
\right.
\end{equation}
The stationary point $p^{c}_{n}$ can be derived by solving $\partial f/\partial p_{n}=0$ as follows:
\begin{equation}\label{eqt:der_p}
\begin{aligned}
&\dfrac{\partial f}{\partial p_{n}}=0\\
\Rightarrow &\dfrac{\partial r_{n}}{\partial p_{n}}-b_{n,5}=0\\
\Rightarrow &\dfrac{b_{n,2}b_{n,4}}{b_{n,1}(b_{n,2}p_{n}+b_{n,3})^{2}-b_{n,4}(b_{n,2}p_{n}+b_{n,3})}=2b_{n,5}\\
\Rightarrow &b_{n,1}p^{\prime 2}_{n}-b_{n,4}p^{\prime}_{n}-\dfrac{b_{n,2}b_{n,4}}{2b_{n,5}}=0
\end{aligned}
\end{equation}
where $p^{\prime}_{n}=b_{n,2}p_{n}+b_{n,3}$. The last equation in (\ref{eqt:der_p}) indicates that there is one pair of positive and negative roots for possible stationary points. Obviously, we have $p^{\prime}_{n}\geq b_{n,3}$ for a feasible stationary point since $p_{n}\geq 0$. Therefore, the negative root is never a feasible stationary point and only the positive root is possibly feasible, which is given by
\begin{equation}
p^{c}_{n}=\dfrac{b_{n,4}+\sqrt{b_{n,4}^{2}+\dfrac{2b_{n,1}b_{n,2}b_{n,4}}{b_{n,5}}}}{2b_{n,1}b_{n,2}}-\dfrac{b_{n,3}}{b_{n,2}}
\end{equation}
Apparently, if $p^{c}_{n}\geq 0$, we have $p^{*}_{n}=p^{c}_{n}$ for the optimal power profile given power-splitting ratio profile. Otherwise, we have $p^{*}_{n}=0$.

Thus, (\ref{eqt:p_af}) and (\ref{eqt:rho_af}) are both proved.

\section{Proof of (\ref{eqt:p_df}) and (\ref{eqt:rho_df})}\label{app:c}
It is apparent that (\ref{eqt:candi_p_df}) can be readily derived by solving $\partial f_{1}/\partial p_{n}=0$, which has one unique solution. Therefore, taking $p_{n}\geq 0$ into account, we have (\ref{eqt:p_df}) as the optimal power profile given the dual variables.  

Similarly, by solving $\partial f_{2}/\partial \rho_{n}=0$, we have the following quadratic equation
\begin{equation}
\dfrac{2\alpha_{n}\delta}{\theta_{n}}(1+\dfrac{P_{s}\gamma_{0}}{(d^{sr}_{n})^{2}})^{2}(\rho_{n}+a)^{2}-
\dfrac{2a\alpha_{n}\delta P_{s}\gamma_{0}}{\theta_{n}(d^{sr}_{n})^{2}}(1+\dfrac{P_{s}\gamma_{0}}{(d^{sr}_{n})^{2}})(\rho_{n}+a)-\dfrac{aP_{s}\gamma_{0}}{(d^{sr}_{n})^{2}}=0
\end{equation}
which has two candidate roots. Since $0\leq\rho_{n}\leq 1$, only the root in (\ref{eqt:candi_rho_df}) could be feasible. Similar with Appendix \ref{app:b}, if $0\leq\rho^{c}_{n}\leq 1$, we have $\rho^{*}_{n}=\rho^{c}_{n}$ for the optimal power-splitting ratio profile. Otherwise, we have $\rho^{*}_{n}=\mathop{\argmax}_{\rho_{n}\in\{0,1\}}f_{2}(\rho_{n})$.

Thus, (\ref{eqt:p_df}) and (\ref{eqt:rho_df}) are both proved.

\section{Proof of Lemma \ref{lemma:convexity}}\label{app:d}
The convexity of (\ref{eqt:function_z}) can be validated through its Hessian matrix. By defining $X=(A_{1}+z_{1})(A_{2}+z_{2})+r_{1}(A_{2}+z_{2})+r_{2}(A_{1}+z_{1})$, we have
\begin{equation}\label{eqt:first_order}
\left\{
\begin{aligned}
&\dfrac{\partial f}{\partial z_{1}}=(A_{2}+z_{2}+r_{2})(\dfrac{1}{X+\dfrac{r_{1}r_{2}}{\omega}}-\dfrac{1}{X})\\
&\dfrac{\partial f}{\partial z_{2}}=(A_{1}+z_{1}+r_{1})(\dfrac{1}{X+\dfrac{r_{1}r_{2}}{\omega}}-\dfrac{1}{X})
\end{aligned}
\right.
\end{equation}
for the first-order derivatives.
Based on (\ref{eqt:first_order}), the second-order derivatives can be derived as
\begin{equation}\label{eqt:second_order}
\left\{
\begin{aligned}
&\dfrac{\partial^{2}f}{\partial z_{1}^{2}}=(A_{2}+z_{2}+r_{2})^{2}(\dfrac{1}{X^{2}}-\dfrac{1}{(X+\dfrac{r_{1}r_{2}}{\omega})^{2}})\\
&\dfrac{\partial^{2}f}{\partial z_{2}^{2}}=(A_{1}+z_{1}+r_{1})^{2}(\dfrac{1}{X^{2}}-\dfrac{1}{(X+\dfrac{r_{1}r_{2}}{\omega})^{2}})\\
&\dfrac{\partial^{2}f}{\partial z_{1}\partial z_{2}}=(A_{1}+z_{1}+r_{1})(A_{2}+z_{2}+r_{2})(\dfrac{1}{X^{2}}-\dfrac{1}{(X+\dfrac{r_{1}r_{2}}{\omega})^{2}})
\end{aligned}
\right.
\end{equation}
Then the Hessian matrix is given by
\begin{equation}
\bigtriangledown ^{2}f(z_{1},z_{2})=(\dfrac{1}{X^{2}}-\dfrac{1}{(X+\dfrac{r_{1}r_{2}}{\omega})^{2}})
\left[
\begin{array}{cc}
(A_{2}+z_{2}+r_{2})^{2} & (A_{1}+z_{1}+r_{1})(A_{2}+z_{2}+r_{2})\\ 
(A_{1}+z_{1}+r_{1})(A_{2}+z_{2}+r_{2}) & (A_{1}+z_{1}+r_{1})^{2}
\end{array}
\right]
\end{equation}
which can be easily found semi-definite since $0<X<X+r_{1}r_{2}/\omega$ always holds. Therefore, (\ref{eqt:function_z}) is convex over $z_{1}$ and $z_{2}$.

\section{Proof of Theorem \ref{theorem:lb_af}}\label{app:e}
Lemma \ref{lemma:convexity} has shown that (\ref{eqt:function_z}) is convex over $\bm{z}=[z_{1},z_{2}]^{T}$. Since the first-order Taylor approximation of a convex function is a global under-estimator\cite{Boyd:2004:CO:993483}, we have $f(\bm{z})\geq f(\bm{z}_{\bm{0}})+\bigtriangledown f(\bm{z}_{\bm{0}})^{T}(\bm{z}-\bm{z}_{\bm{0}})$ for any given $\bm{z}_{\bm{0}}=[z_{1,0},z_{2,0}]^{T}$. With $\bm{z}_{\bm{0}}=[0,0]^{T}$, we have the following inequality:
\begin{equation}
\begin{aligned}
\log(\omega +\dfrac{\dfrac{r_{1}r_{2}}{(A_{1}+z_{1})(A_{2}+z_{2})}}{1+\dfrac{r_{1}}{A_{1}+z_{1}}+\dfrac{r_{2}}{A_{2}+z_{2}}})&\geq \log(\omega +\dfrac{\dfrac{r_{1}r_{2}}{A_{1}A_{2}}}{1+\dfrac{r_{1}}{A_{1}}+\dfrac{r_{2}}{A_{2}}})\\
&-(\dfrac{1}{X_{0}}-\dfrac{1}{X_{0}+\dfrac{r_{1}r_{2}}{\omega}})[(A_{2}+r_{2})z_{1}+(A_{1}+r_{1})z_{2}]
\end{aligned}
\end{equation}
where $X_{0}=A_{1}A_{2}+r_{1}A_{2}+r_{2}A_{1}$.

By reviewing (\ref{eqt:af}), it can be found that $(d^{sr}_{n,l+1})^{2}=(d^{sr}_{n,l})^{2}+\Delta^{sr}_{n}$ and $(d^{rd}_{n,l+1})^{2}=(d^{rd}_{n,l})^{2}+\Delta^{rd}_{n}$ after $l$ rounds of iteration, where $\Delta^{sr}_{n,l}$ and $\Delta^{rd}_{n,l}$ are defined as $\Delta^{sr}_{n,l}=\Delta x_{n,l}^{2}+\Delta y_{n,l}^{2}+2(x_{n,l}-S_{x})\Delta x_{n,l}+2(y_{n,l}-S_{y})\Delta y_{n,l}$ and $\Delta^{rd}_{n,l}=\Delta x_{n,l}^{2}+\Delta y_{n,l}^{2}+2(x_{n,l}-D_{x})\Delta x_{n,l}+2(y_{n,l}-D_{y})\Delta y_{n,l}$, respectively. Then by letting $\omega=1+P_{s}\gamma$, $r_{1}=P_{s}\gamma_{0}\rho_{n}/(\rho_{n}+a)$, $r_{2}=p_{n}\gamma_{0}$, $A_{1}=(d^{sr}_{n,l})^{2}$, $A_{2}=(d^{rd}_{n,l})^{2}$, $z_{1}=\Delta^{sr}_{n}$ and $z_{2}=\Delta^{rd}_{n}$, (\ref{eqt:lb_af}) can be proved by defining
\begin{equation}\label{eqt:def}
\left\{
\begin{aligned}
&X_{n,l}=(d^{sr}_{n,l}d^{rd}_{n,l})^{2}+\dfrac{P_{s}\gamma_{0}\rho_{n}(d^{rd}_{n,l})^{2}}{(\rho_{n}+a)}+p_{n}\gamma_{0}(d^{sr}_{n,l})^{2}\\
&\phi_{n,l}=\dfrac{1}{X_{n,l}}-\dfrac{1}{X_{n,l}+\dfrac{P_{s}\gamma_{0}^{2}\rho_{n}p_{n}}{(\rho_{n}+a)(1+P_{s}\gamma)}}\\
&\mu_{n,l}=\phi_{n,l}[(d^{sr}_{n,l})^{2}+(d^{rd}_{n,l})^{2}+\dfrac{P_{s}\gamma_{0}\rho_{n}}{\rho_{n}+a}+p_{n}\gamma_{0}]\\
&\delta_{n,l}=2\phi_{n,l}[((d^{rd}_{n,l})^{2}+p_{n}\gamma_{0})(x_{n,l}-S_{x})+((d^{sr}_{n,l})^{2}+\dfrac{P_{s}\gamma_{0}\rho_{n}}{\rho_{n}+a})(x_{n,l}-D_{x})]\\
&\eta_{n,l}=2\phi_{n,l}[((d^{rd}_{n,l})^{2}+p_{n}\gamma_{0})(y_{n,l}-S_{y})+((d^{sr}_{n,l})^{2}+\dfrac{P_{s}\gamma_{0}\rho_{n}}{\rho_{n}+a})(y_{n,l}-D_{y})]
\end{aligned}
\right.
\end{equation}

%
%

\ifCLASSOPTIONcaptionsoff
  \newpage
\fi

\vspace{-2cm}

\end{document}